\newcommand{\abs}[1]{\left\lvert #1 \right\rvert}
\newcommand{\floor}[1]{\left \lfloor #1 \right \rfloor}
\def\ketbra#1#2{{\vert#1\rangle\!\langle#2\vert}}
\newcommand\norm[1]{\left\lVert#1\right\rVert}
\theoremstyle{plain}
\newtheorem{theorem}{Theorem}
\newtheorem{lemma}{Lemma}
\theoremstyle{definition}
\newcommand{\chemUofT}{\affiliation{%
    Chemical Physics Theory Group, Department of Chemistry, University of Toronto, Toronto, Ontario, Canada}
    }
\newcommand{\csUofT}{\affiliation{%
    Department of Computer Science, University of Toronto, Toronto, Ontario, Canada}}
\newcommand{\PNNL}{\affiliation{%
    Pacific Northwest National Laboratory, Richland, Washington, USA}}
\newcommand{\vecInst}{\affiliation{%
    Vector Institute for Artificial Intelligence, Toronto, Ontario, Canada}}
\newcommand{\chemEngUofT}{\affiliation{%
    Department of Chemical Engineering \& Applied Chemistry, University of Toronto, Toronto, Ontario, Canada}}
\newcommand{\matSciUofT}{\affiliation{%
    Department of Materials Science \& Engineering, University of Toronto, Toronto, Ontario, Canada}}
\newcommand{\CIFAR}{\affiliation{%
    Lebovic Fellow, Canadian Institute for Advanced Research, Toronto, Ontario, Canada}}
\newcommand{\CIFARB}{\affiliation{
Canadian Institute for Advanced Research, Toronto, Ontario, Canada
}}
\newcommand{\AIST}{\affiliation{%
    Research Center for Emerging Computing Technologies, National Institute of Advanced Industrial Science and Technology (AIST), 1-1-1 Umezono, Tsukuba, Ibaraki 305-8568, Japan}}
\newcommand{\Keio}{\affiliation{%
    Quantum Computing Center, Keio University, 3-14-1 Hiyoshi, Kohoku-ku, Yokohama, Kanagawa, 223-8522, Japan}}
\newcommand{\macquarie}{\affiliation{%
Center for Engineered Quantum Systems, School of Mathematical and Physical Sciences, Macquarie University, 2109 NSW, Australia
}}
\newcommand{\IQST}{\affiliation{%
Institute for Quantum Science and Technology,
University of Calgary, Alberta, Canada}}
\begin{document}
\title{Fast quantum algorithm for differential equations}
\author{Mohsen Bagherimehrab}
\email{mohsen.bagherimehrab@utoronto.ca}
\chemUofT\csUofT
\author{Kouhei Nakaji}\chemUofT\AIST\Keio
\author{Nathan Wiebe}\csUofT\PNNL\CIFARB
\author{Gavin~K.~Brennen}\macquarie
\author{Barry C.~Sanders}\IQST
\author{Al\'an Aspuru-Guzik}
\chemUofT\csUofT\vecInst\chemEngUofT\matSciUofT\CIFAR

\date{\today}

\begin{abstract}
Partial differential equations (PDEs) are ubiquitous in science and engineering.
Prior quantum algorithms for solving the system of linear algebraic equations obtained from discretizing a PDE have a computational complexity that scales at least linearly with the condition number~$\kappa$ of the matrices involved in the computation.
For many practical applications, $\kappa$ scales polynomially with the size~$N$ of the matrices, rendering a polynomial complexity in~$N$ for these algorithms.
Here we present a quantum algorithm with a complexity that is polylogarithmic in~$N$ but is independent of~$\kappa$ for a large class of PDEs.
Our algorithm generates a quantum state from which features of the solution can be extracted.
Central to our methodology is using a wavelet basis as an auxiliary system of coordinates in which the condition number of associated matrices becomes independent of~$N$ by a simple diagonal preconditioner.
We present numerical simulations showing the effect of the wavelet preconditioner for several differential equations.
Our work could provide a practical way to boost the performance of quantum simulation algorithms where standard methods are used for discretization.
\end{abstract}
\maketitle

Partial differential equations~(PDEs) are an integral part of many mathematical models in science and engineering.
The common approach to solving a PDE on a digital computer is mapping it to a matrix equation by a discretization method and solving the matrix equation using numerical algorithms.
The time complexity of the best-known classical algorithm for solving a linear matrix equation obtained from discretizing a linear PDE scales polynomially with the size~$N$ and condition number~$\kappa$ of matrices involved in the computation~\cite{she94,HHL09}; $\kappa$ is defined as the ratio of the largest to smallest singular values of a matrix~\footnote{The condition number is fundamentally a measure of error sensitivity, and its relation to singular values is a consequence of analyzing the error sensitivity under a specific matrix norm (commonly the 2-norm).
Specifically, the condition number of a function quantifies how sensitive the output of the function is to small changes in the input. In the context of a linear system $Ax=b$, the condition number is the ratio of relative changes in the solution~($\norm{\Delta x}/\norm{x}$) to the relative changes in the input~($\norm{\Delta b}/\norm{b}$).
When measuring the sensitivity under 2-norm, the condition number of a matrix becomes the ratio of its largest to smallest singular values.}.
In contrast, several quantum algorithms, known as quantum linear-system algorithms~(QLSAs)~\cite{HHL09,Amb12,CKS17,GSL+19,LT20,SSD19,CAS+22}, have been developed for solving a system of linear equations with a complexity that grows at least linearly with~$\kappa$, and it cannot be made sublinear in~$\kappa$ for a \textit{general} system of linear equations by standard complexity-theoretic assumptions~\cite{HHL09}.

A key challenge in solving differential equations is the arising ill-conditioned matrices (i.e., matrices with large~$\kappa$) after the discretization.
The condition number of these matrices generally grows polynomially with their size~$N$;
it grows as~$N^2$ for second-order differential equations.
Such ill-conditioned matrices not only lead to numerical instabilities for solutions but also eliminate the exponential improvement of existing quantum algorithms with respect to~$N$.
Despite the extensive previous work on efficient quantum algorithms for differential equations~\cite{CLO21,Ber14,MP16,AKC+19,CL20,kro22}, the challenge of ill-conditioned matrices remains, as these algorithms require resources that grow~with~$\kappa$.

In this paper, we establish a quantum algorithm for a large class of inhomogeneous PDEs with a complexity that is polylogarithmic in~$N$ and is independent of~$\kappa$.
Our algorithm's complexity does not violate the established lower bound on the complexity of solving a linear system~\cite{HHL09} as the lower bound is for generic cases.
With respect to complexity lower bounds, the class of differential equations we consider is similar to the class of `fast-forwardable' Hamiltonians~\cite{GSS21,AA17}:
Hamiltonians that are simulatable in a sublinear time and do not violate the ``no-fast-forwarding" theorem~\cite{BAC+07}.
Therefore, we refer to the class of differential equations solvable in polylogarithmic time as the class of `fast-solvable' differential~equations.

The core of our methodology is utilizing wavelets for preconditioning (i.e., controlling the condition number of) the linear system obtained from discretizing a PDE.
Wavelets form a versatile class of basis functions with appealing features~\cite{Dau92,Bey92,mbm}, such as spatial and frequency localization, making them advantageous for applications in  quantum physics and computation~\cite{BP13,BRS+15,BSB+22,ES16,GSM+22}, and computational chemistry~\cite{HBB+16}.
Notably, wavelets provide an optimal preconditioner for a large class of operators by a simple diagonal matrix~\cite{DK92,dah01,CDD+01,jaf92,bey21}.
The preconditioner is optimal in the sense that the preconditioned matrices have uniformly bounded condition numbers independent of their size;
the condition number is constant and only depends on the wavelet~type (Fig.~\ref{fig:precond}(e)).
The diagonal preconditioner is also structured with a particular pattern on diagonal entries (Fig.~\ref{fig:precond}(b)).
We exploit the optimality and structure of the wavelet preconditioner to perform matrix operations on a quantum computer that leads to a polylogarithmic-time quantum algorithm for solving a broad class of PDEs.

\begin{figure*}
    \centering
    \includegraphics[width=\linewidth]{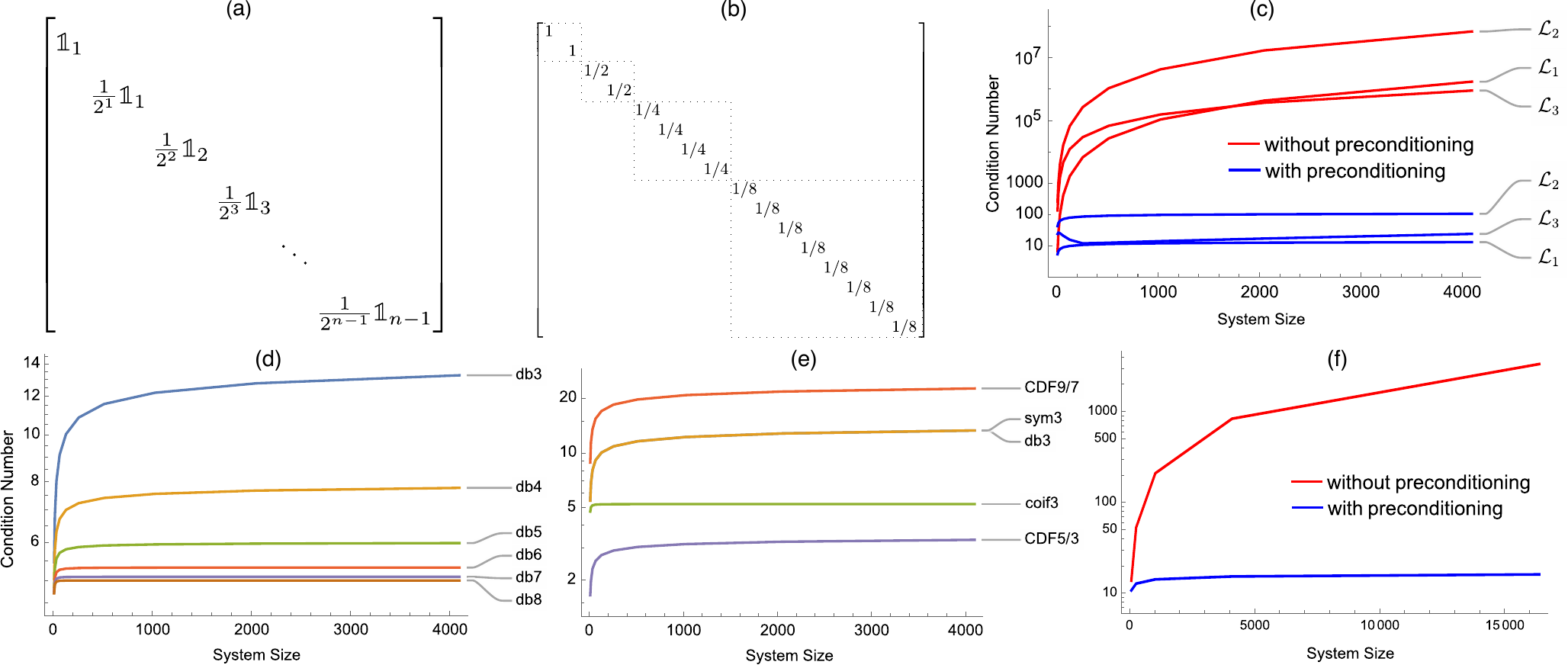}
    \caption{(a)~The diagonal wavelet preconditioner
    with size~$2^n$; $n=4$ is shown in~(b).
    The matrix has a block-diagonal structure with blocks of exponentially increasing size.
    (c)~Condition numbers for differential operators~$\mathcal{L}_1,\mathcal{L}_2$
    and $\mathcal{L}_3$ given in Eq.~\eqref{eq:example_PDEs}, all discretized by finite-difference method on a grid with periodic boundaries.
    Here system size is the number of grid points and the Daubechies wavelet with index~3 is used for preconditioning.
    (d)~The condition number decreases by increasing the wavelet index.
    (e)~The condition number depends on the type of wavelet used but is independent of system size.
    Only~$\mathcal{L}_1$ is preconditioned in~(d) and~(e).
    Symlet~(sym3), Daubechies~(db3) and Coiflet~(coif3) wavelets~\cite{Dau92} with index~3, and
    Cohen–Daubechies–Feauveau~(CDF) wavelets of type ``9/7" and ``5/3"~\cite{CDF92} are used.
    (f)~Condition numbers for 2D Laplacian $\mathcal{L}:=\partial^2_x+\partial^2_y$ discretized by finite-difference method on a square grid with periodic boundaries.}
    \label{fig:precond}
\end{figure*}

We note that the PDE does not need to be discretized using wavelets.
Standard methods such as the finite-difference method can be used for discretization.
In this case, we use wavelets only as a tool to perform linear-algebra manipulations of the linear system obtained by the finite-difference method.
That is to say, wavelets serve as an auxiliary basis in which the condition numbers of matrices involved in the computation are controlled.
This hybrid approach yields a practical way to significantly boost the performance of algorithms based on standard discretization methods and shows the advantage of performing computations in wavelet bases.

Methodologically, we first transform the system of linear equations obtained by the finite-difference discretization into a wavelet basis.
Then we precondition the wavelet-transformed system with the wavelet preconditioner to obtain a linear system with a uniform condition number.
After preconditioning,
a direct approach to obtain a quantum solution for the original linear system is to generate a quantum state encoding the solution of the preconditioned system by QLSAs, which would have a $\kappa$-independent cost, 
and then transform the generated state to a quantum state that encodes the solution of the original system.
This transformation requires applying the nonunitary preconditioner, which needs to be implemented probabilistically.
However, because the condition number of the preconditioner scales linearly with $N=2^n$ (Fig.~\ref{fig:precond}a), the success probability becomes exponentially small in~$n$, leading to an overall $\mathcal{O}(N)$-time algorithm (see Appendix~\ref{apx:direct_approach}).

To achieve polylogarithmic complexity, we avoid directly applying the preconditioner.
Instead, we add two ancilla qubits to the $n$-qubit system and generate a quantum state in the extended $(n+2)$-qubit space that enables computing the expectation value of a given $n$-qubit observable~$M$, a Hermitian operator representing a physical quantity.
To extract this value, we construct an observable~$M'$ in the extended space such that its expectation value with respect to the generated state yields the desired expectation value of~$M$.
Crucially, $M'$ is a $4\times4$ block matrix with~$M$ in each block, a structure we use to construct oracles of~$M'$ by a single query to the oracles of~$M$.

Our approach requires implementing two diagonal unitaries with the same structure as the preconditioner and constructing a block-encoding for the inverse of the preconditioned matrix.
We use the bounded condition number of the preconditioned matrix, the structure of the preconditioner, and properties of a wavelet transformation to perform each operation in polylogarithmic time, which results in a polylogarithmic algorithm for generating a quantum solution for a certain class of PDEs.

\textit{Notation}.---We use $\norm{A}$ for the 2-norm (the operator norm) of an operator $A$, $\norm{\bm{v}}$ for the 2-norm of a vector $\bm{v}$ and $\norm{f}$ for the 2-norm of a function~$f$.
We refer to $A\in \mathbb{C}^{2^n\times 2^n}$ as an $n$-qubit matrix and denote the $n$-qubit identity by $\mathbbm{1}_n$.
We use the technique of block encoding~\cite{GSL+19}, which is a way to embed a matrix as a block of a larger unitary matrix.
Formally, for~$\varepsilon>0$ and a number of qubits~$a$, an $(\alpha, a, \varepsilon)$–block-encoding of an $n$-qubit matrix $A$ is an $(n+a)$-qubit unitary~$U_A$ such that $\norm{A-\alpha(\bra{0^a}\otimes\mathbbm{1}_n)U_A(\ket{0^a}\otimes\mathbbm{1}_n)}\leq \varepsilon$,
where $\alpha\geq \norm{A}$.

\textit{Outline}.---The rest of this paper proceeds as follows.
First, we specify the class of fast-solvable PDEs.
Then we detail the hybrid approach and present the algorithm for PDEs with periodic boundaries, followed by a complexity analysis.
We then extend the algorithm to non-periodic boundaries and describe an alternative approach in which PDEs are directly discretized in a wavelet basis.
We conclude with a discussion and outlook.
Proofs of stated lemmas and detailed complexity analysis are provided in the Appendix.

\textit{Fast-solvable PDEs}.---We first introduce some terminology for PDEs.
A $d$-dimensional~($d$D) linear PDE is formally written as~$\mathcal{L}u(\bm{x})=b(\bm{x})$,
where~$\bm{x}$ is an element of a bounded domain $\Omega \subset \mathbb{R}^d$,
$u(\bm{x})$ and the inhomogeneity~$b(\bm{x})$ are scalar functions, and~$\mathcal{L}$ is a linear operator acting~on~$u(\bm{x})$.
For a linear PDE of order~$m$, $\mathcal{L}$ has the form 
$\sum_{\bm{\alpha}} c_{\bm{\alpha}}(\bm{x}) \partial^{\bm{\alpha}}u$
where $\bm{\alpha}=(\alpha_1,\ldots,\alpha_d)$ is a multi-index with non-negative integers such that $|\bm{\alpha}|:=\alpha_1+\cdots+\alpha_d \leq m$, each coefficient $c_{\bm{\alpha}}(\bm{x})$ is a scalar function, and
$\partial^{\bm{\alpha}}u:=
\partial^{\alpha_1}_1\cdots\partial^{\alpha_d}_d u$ with $\partial^{\alpha_i}_i$ the partial derivative of order $\alpha_i$ with respect to $x_i$.
For all nonzero $\bm{\xi}^{\bm{\alpha}}:=\xi^{\alpha_1}_1\cdots\xi^{\alpha_d}_d$ with $\xi^{\alpha_i}_i \in \mathbb{R}^d$
and all $\bm{x}\in\Omega$,
if $\sum_{|\bm{\alpha}|=m} c_{\bm{\alpha}}(\bm{x})\bm{\xi}^{\bm{\alpha}} \neq0$, then~$\mathcal{L}$ is called `elliptic' operator and its corresponding PDE is called elliptic PDE.

Let $B(u,v):= \int u(\bm{x}) \mathcal{L}v(\bm{x})\,\text{d}\bm{x}$ be the bilinear form induced by~$\mathcal{L}$ on the space~$\mathcal{H}$ of sufficiently differentiable functions;
$B$ is a bilinear map~$\mathcal{H}\times\mathcal{H}\mapsto\mathbb{R}$.
The class of differential operators for which the bilinear form is
\begin{enumerate}[(I)]
    \item\label{cond1}
    symmetric:
    $B(u,v) = B(v,u)$,
    \item\label{cond2}
    bounded: $B(u,v)\leq C \norm{u} \norm{v}$, and
    \item\label{cond3}
    coercive (or elliptic): $B(u,u)\geq c \norm{u}^2$,
    for $0<c< C$,
\end{enumerate}
comprises the PDEs that can be optimally preconditioned by wavelets~\cite{DK92,dah01,CDD+01};
the Lax–Milgram theorem asserts
existence and uniqueness of solution for the variational form of these PDEs~\cite{coh03}.
This class includes prominent examples such as second-order linear and elliptic PDEs with constant or slowly varying coefficients (e.g., the Poisson, Helmholtz, and time-independent
Schr\"{o}dinger equations) as well as higher order equations such as the biharmonic equation~\cite{jaf92,ABC+08}.
A notable subclass is the second-order PDEs known as Sturm-Liouville problems~\footnote{The operator form of Sturm-Liouville problems is
$\mathcal{L}u(\bm{x})=-\sum_{ij=1}^d \partial_j(p_{ij}(\bm{x})\partial_i u)+q(\bm{x})u$,
where $0\leq q(\bm{x})\leq C_q$ and functions $p_{ij}(\bm{x})$ are elements of a $d\times d$ matrix $P_\textsc{sl}$ that satisfy
$c \norm{\bm{v}}^2
\leq \bm{v}^TP_\textsc{sl}\bm{v}
\leq C \norm{\bm{v}}^2 $
for any $\bm{v}\in \mathbb{R}^d$
and positive constants
$c, C$ and $C_q$.
For $d=1$:
$\mathcal{L}u(x) = -\frac{\text{d}}{\text{d}x}
\left(p(x)\frac{\text{d}u}{\text{d}x}\right)
+q(x)u$
with~$c\leq p(x)\leq C$.}.
As illustrative examples, we use three operators defined on the domain $\Omega=[0,1]$
\begin{align}
\label{eq:example_PDEs}
\begin{split}
    \mathcal{L}_1=\frac{\text{d}^2}{\text{d}x^2},
    \quad
    \mathcal{L}_2=\frac{\text{d}^2}{\text{d}x^2}-\frac{\text{d}}{\text{d}x}+\mathbbm{1},\\
    \mathcal{L}_3:=-\frac{\text{d}}{\text{d}x}
    \left(\cosh(x/4)\frac{\text{d}}{\text{d}x}\right)
    +\text{e}^x\mathbbm{1}
\end{split}
\end{align}
in our numerical simulations shown in Fig.~\ref{fig:precond}(c), with $\mathcal{L}_3$ representing an operator with slowly varying coefficients.

\textit{QLSAs}.---Once the PDE $\mathcal{L}u(\bm{x})=b(\bm{x})$ is discretized,
we have a linear system of equations~$A\bm{u}=\bm{b}$ with $\bm{u,b}\in\mathbb{C}^N$ and $A \in\mathbb{C}^{N\times N}$.
We assume $\norm{A}\leq1$ for simplicity;
$A$ and $\bm{b}$ can be rescaled to obey this condition.
We also assume access to a $(1,a,0)$-block-encoding of~$A$ and access to a procedure $\mathcal{P}_{\bm{b}}$
that generates $\ket{\bm{b}}:=\sum_i b_i\ket{i}/\norm{\bm{b}}$, a state that encodes $\bm{b}$ on its amplitudes up to normalization.
Let $\ket{\bm{u}}:=A^{-1}\ket{\bm{b}}=\sum_iu_i\ket{i}$ be the unnormalized state encoding $\bm{u}$ on its amplitudes. 
The quantum approach for `solving' this linear system is to generate a quantum state that enables extracting features of the solution vector by computing the expectation value $\bm{u}^\dagger M\bm{u}$ for a given observable~$M$.
Existing QLSAs generate the state $\ket{\bm{u}}/\norm{\bm{u}}$ that, up to normalization, encodes the solution vector~$\bm{u}$ on its amplitudes.
This state enables extracting the expectation value $\bm{u}^\dagger M\bm{u}=\braket{\bm{u}|M|\bm{u}}$ for a given observable~$M$ using measurement algorithms~\cite{KOS07,ANB+22,HWM+22}.

\textit{The solution state.}---The aim in the quantum approach for solving a linear system is to generate a quantum state (not necessarily $\ket{\bm{u}}$) that enables computing $\braket{\bm{u}|M|\bm{u}}$ for a given~$M$.
Instead of generating $\ket{\bm{u}}$ as conventional approaches, our approach generates another state~$\ket{\psi}$ (specified later) that enables computing the expectation value of interest, i.e., $\braket{\bm{u}|M|\bm{u}}$.
Hereafter, we refer to the state $\ket{\psi}$ in our approach as the `solution state'.
For illustration, first we describe our approach to generate the solution state for ODEs ($d=1$) and then extend it to PDEs.

Our approach involves preconditioning the system~$A\bm{u}=\bm{b}$ in a wavelet basis.
To this end, first we transform the system into a wavelet basis to achieve $A_W\bm{u}_W=\bm{b}_W$, where $\bm{u}_W:=W\bm{u}$, $\bm{b}_W:=W\bm{b}$ and $A_W:=WAW^T$. Here $A_W$ is the wavelet transformation of~$A$, and~$W$ is the $n$-level wavelet transform matrix;
we refer to~\cite[Appendix~A]{BSB+22} and \cite{BA24} for details of the wavelet transform and explicit structure of~$W$.
Preconditioning in the wavelet basis is then achieved by applying the wavelet preconditioner~$P$, resulting in the system
$A_P\bm{u}_P=\bm{b}_P$,
where $\bm{u}_P:=P^{-1}\bm{u}_W$, $\bm{b}_P:=P\bm{b}_W$,
and $A_P:=PA_WP$ is the preconditioned matrix.

\textit{The algorithm}.---By wavelet preconditioning, the state $\ket{\bm{u}}$ can be written as $\ket{\bm{u}} = W^{-1}PA^{-1}_P PW\ket{\bm{b}}$.
We decompose the preconditioner as $P=(U^{+}+U^{-})/2$, where
\begin{equation}
\label{eq:Upm}
    U^\pm:= P \pm\text{i}\sqrt{\mathbbm{1}-P^2}= \text{e}^{\pm\text{i}\arccos{P}}
\end{equation}
is a unitary matrix.
By this decomposition,
$U^{0/1}\equiv U^{+/-}$~and
\begin{equation}
\label{eq:psiabODEs}
\ket{\psi_{ab}}:=W^\dagger U^{a}A_P^{-1}U^{b}W\ket{\bm{b}}
   \quad \forall a,b\in\mathbb{B}:=\{0,1\},
\end{equation}
we obtain
$\ket{\bm{u}}=\frac{1}{4}\sum_{ab}\ket{\psi_{ab}}$, yielding
the identity
\begin{equation}
\label{eq:uMu}
    \braket{\bm{u}|M|\bm{u}} =
    \frac{1}{16}
    \sum_{abcd\in\mathbb{B}} \braket{\psi_{ab}|M|\psi_{cd}}
\end{equation}
for the desired expectation value.
Thus, generating the state
\begin{equation}
\label{eq:psi}
    \ket{\psi}:=
\frac{1}{2\xi}\sum_{ab\in\mathbb{B}}\ket{ab}\ket{\psi_{ab}},
\quad \xi^2:=\frac{1}{4} \sum_{ab\in\mathbb{B}}\norm{\ket{\psi_{ab}}}^2,
\end{equation}
and computing $\braket{\psi|M'|\psi}$ with the observable
\begin{equation}
\label{eq:Mprime}
    M':=\sum_{abcd\in \mathbb{B}}\ket{ab}\bra{cd}\otimes M
\end{equation}
enables computing the desired expectation value as
\begin{equation}
\label{eq:expMprime}
    \braket{\psi|M'|\psi}=
    \frac{1}{4\xi^2}
    \sum_{abcd\in\mathbb{B}} \braket{\psi_{ab}|M|\psi_{cd}} =
    \frac{4}{\xi^2}\braket{\bm{u}|M|\bm{u}}.
\end{equation}

\begin{figure*}
\centering
    \includegraphics[width=\linewidth]{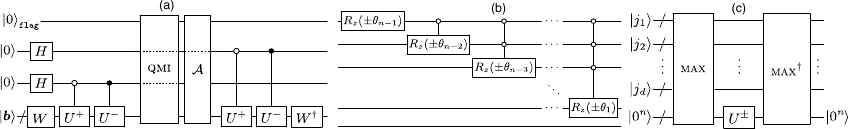}
    \caption{(a)~Quantum circuit for generating the solution state $\ket{\psi}$ in Eq.~\eqref{eq:psi}.
    Multi-qubit gates do not act on dashed-line qubits.
\textsc{qmi} block-encodes the inverse of the preconditioned matrix as in Eq.~\eqref{eq:qmi} and $\mathcal{A}$ denotes amplitude amplification.
    (b)~Implementing~$U^\pm$ in Eq.~\eqref{eq:Udecomp} with classically precomputed angles
    $\theta_1,\ldots \theta_{n-1}$.
    No action is performed on the last qubit.
    (c)~Implementing $U^\pm_{d\text{D}}$ in Eq.~\eqref{eq:UdD} with \textsc{max} defined in~Eq.~\eqref{eq:max}.
     }
    \label{fig:qCircs}
\end{figure*}

We thus need to construct a circuit that generates the solution state $\ket{\psi}$ and
construct oracles of $M'$ from oracles of~$M$.
As~$M'$ is a $4\times4$ block matrix of blocks $M$, oracles of $M'$ can be constructed using two ancilla qubits and one query to oracles of~$M$ (see Appendix~\ref{app:oraclesM}).
The circuit in Fig.~\ref{fig:qCircs}(a) generates $\ket{\psi}$ based on~$\ket{\psi_{ab}}$ defined in Eq.~\eqref{eq:psiabODEs}.
The operation \textsc{qmi} in this circuit is an
$(\alpha,a,\varepsilon)$-block-encoding for the inverse of the preconditioned matrix $A_P$.
Specifically, for $\varepsilon>0$, \textsc{qmi} is a unitary such that
\begin{equation}
\label{eq:qmi}
    \norm{A^{-1}_P-\alpha(\bra{0^a}\otimes\mathbbm{1}_n)\textsc{qmi}
    (\ket{0^a}\otimes\mathbbm{1}_n)}
    \leq \varepsilon,
\end{equation}
where $a$ is the number of ancilla qubits used in block-encoding and $\alpha\geq \lVert A^{-1}_P\rVert$.
For simplicity, only one of the ancilla qubits is shown in Fig.~\ref{fig:qCircs}(a); this qubit serves as a flag to indicate the success of our algorithm.
We provide a procedure for executing \textsc{qmi} in Appendix~\ref{apx:qmi}.

We now describe how to implement~$U^\pm$.
By Eq.~\eqref{eq:Upm}, $U^\pm$ is a diagonal matrix with the same block-diagonal structure as~$P$ in Fig.~\ref{fig:precond}(a):
diagonals in each block of~$U^\pm$ are a constant of the form~$\text{exp}(\pm\text{i}\theta)$ for some $\theta$.
Let $\Lambda_0^r(R_z(\theta))$ denote the $\ket{0^r}$-controlled-$R_z(\theta)$ with $R_z(\theta):=\exp{(\text{i}\theta Z)}$; then we have
\begin{equation}
\label{eq:Udecomp}
    U^\pm = \prod_{r=1}^{n-1}\Lambda_0^{r-1}
    (R_z(\pm\theta_{n-r}))\otimes \mathbbm{1}_{n-r},
\end{equation}
where $\theta_r:=\arccos(1/2^r)$ and~$\mathbbm{1}_n$ is the $n$-qubit identity.
This decomposition yields the circuit in
Fig.~\ref{fig:qCircs}(b)
for~$U^\pm$.

\textit{Extension to PDEs}.---Our approach can be extended to $d$D PDEs by constructing a $d$D preconditioner
and performing $d$D wavelet transformation.
The latter is achieved by the tensor product of 1D wavelet transformations.
We construct the $d$D preconditioner in
Appendix~\ref{apx:PdD} and state, in Lemma~\ref{lemma:PdD}, its action on basis states
$\ket{j_1}\ket{j_2}\cdots\ket{j_d}$,
the tensor product of $n$-qubit states. 
For $d=1$, the 1D preconditioner in Fig.~\ref{fig:precond}(a) acts on basis state $\ket{j}$ as $P\ket{j}=2^{-\floor{\log_2 j}} \ket{j}$
with no action on $\ket{0}$.

\begin{lemma}
\label{lemma:PdD}
The action of $d$\textup{D} preconditioner on basis states is
\begin{equation}
    P_{d\textup{D}} \ket{j_1}\ket{j_2}\cdots\ket{j_d}
    = 2^{-\floor{\log_2 j_{\max}}} \ket{j_1}\ket{j_2}\cdots\ket{j_d},
\end{equation}
where $j_{\max}:=\max(j_1,\ldots,j_d)$;
action is trivial if $j_{\max}=0$.
\end{lemma}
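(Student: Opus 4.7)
The plan is to reduce the $d$-dimensional claim to the 1D action $P\ket{j}=2^{-\lfloor\log_2 j\rfloor}\ket{j}$ stated in the main text. First I would recall from the 1D case that the block labeled by level~$\ell$ collects precisely those indices $j$ with $\lfloor\log_2 j\rfloor=\ell$, so the natural ``level'' of the 1D basis function $\ket{j_i}$ is $\ell_i:=\lfloor\log_2 j_i\rfloor$, and $P$ scales that block by~$2^{-\ell_i}$.

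The central step is to establish the intermediate identity
\[
P_{d\text{D}}\,\ket{j_1}\cdots\ket{j_d}=2^{-\max_i\ell_i}\,\ket{j_1}\cdots\ket{j_d}.
\]
I would unpack the construction of $P_{d\text{D}}$ from Appendix~\ref{apx:PdD} and verify that it is the diagonal operator whose eigenvalue on a tensor-product basis state is governed by the coarsest level $\max_i\ell_i$ appearing among the $d$ factors. This is the $\max$-level weighting used in the multidimensional wavelet preconditioning literature, and is precisely the choice (as opposed to, e.g., $\sum_i\ell_i$) that yields a condition number bounded independently of~$N$ for elliptic PDEs in $d$ dimensions.

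Given the intermediate identity, the stated formula follows by a one-line monotonicity argument: since $\log_2$ is monotone increasing and $\lfloor\cdot\rfloor$ preserves ordering on the reals,
\[
\max_i\lfloor\log_2 j_i\rfloor=\bigl\lfloor\log_2\max_i j_i\bigr\rfloor=\lfloor\log_2 j_{\max}\rfloor,
\]
so $2^{-\max_i\ell_i}=2^{-\lfloor\log_2 j_{\max}\rfloor}$, which is the announced eigenvalue. The boundary case $j_{\max}=0$ forces every $j_i=0$: each factor then lies in the trivial scaling subspace on which the 1D preconditioner acts as the identity, and this triviality is inherited by the $d$-fold tensor product, matching the stated exception.

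The main obstacle will be the intermediate identity itself: one must confirm from the definition in Appendix~\ref{apx:PdD} that the weight attached to a multi-level basis function is $2^{-\max_i\ell_i}$ rather than some other symmetric function of $(\ell_1,\ldots,\ell_d)$. This requires careful index bookkeeping through the $d$-fold tensor product of wavelet and scaling subspaces, and checking that mixed components (where some factors are ``scaling'' and others are ``wavelet'') fit into the same $\max$-level rule. Once this verification is in place, the remainder of the proof is essentially a change of variables from wavelet levels back to grid indices.
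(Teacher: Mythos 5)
Your proposal matches the paper's own argument: the paper likewise constructs $P_{d\text{D}}$ by analogy with the 1D case, assigning to a tensor-product basis function the diagonal weight $2^{-s_{\max}}$ with $s_{\max}$ the maximum of the component scale indices (citing the wavelet-preconditioning literature for the rigorous justification of this max rule), and then converts scale indices to grid indices via $j\mapsto\lfloor\log_2 j\rfloor$ exactly as you do. Your explicit monotonicity identity $\max_i\lfloor\log_2 j_i\rfloor=\lfloor\log_2 j_{\max}\rfloor$ makes precise a step the paper leaves implicit; the only quibble is terminological, since $\max_i\ell_i$ is the \emph{finest} (not coarsest) level among the factors.
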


Note that $P_{d\textup{D}}$ is a diagonal matrix whose diagonals have a value~$\leq1$. 
Hence, similar to 1D case, we decompose the $d$D preconditioner as $P_{d\text{D}}=\frac{1}{2}\left(U^+_{d\text{D}}+U^{-}_{d\text{D}}\right)$, a linear combination of two unitaries.
As in the 1D case, here we only need an implementation for $U^\pm_{d\text{D}}$ to generate the solution state for PDEs.
The solution state is $\ket{\psi}$ in Eq.~\eqref{eq:psi} with
\begin{equation}
\label{eq:psiabPDEs}
\ket{\psi_{ab}}=W_{d\text{D}}^\dagger U_{d\text{D}}^{a}A_P^{-1}U_{d\text{D}}^{b}W_{d\text{D}}\ket{\bm{b}}
\quad \forall a,b\in \{0,1\},
\end{equation}
where $W_{d\text{D}}$ is the $d$D wavelet preconditioner.

To construct a circuit for $U^\pm_{d\text{D}}$, let us revisit the 1D case~$U^\pm$.
By the action of 1D preconditioner on basis states, we have
$U^\pm\ket{j}=\text{e}^{\pm\text{i}\theta_j}\ket{j}$, where $\cos\theta_j=2^{-\floor{\log_2 j}}$.
This relation yields $n-1$ distinct nonzero angles, enabling the compilation in Fig.~\ref{fig:qCircs}(b) for~$U^\pm$.
Likewise, the action of $U^\pm_{d\text{D}}$ on basis states is obtained from the preconditioner's action as
\begin{equation}
\label{eq:UdD}
    U^\pm_{d\text{D}} \ket{j_1}\ket{j_2}\cdots\ket{j_d}
    = \text{e}^{\pm\text{i}\theta_{\max}} \ket{j_1}\ket{j_2}\cdots\ket{j_d},
\end{equation}
where $\cos{\theta_{\max}} = 2^{-\floor{\log_2 j_{\max}}}$.
To implement $U^\pm_{d\text{D}}$, first we compute $j_{\max}$ into an $n$-qubit ancilla register by the operation
\begin{equation}
\label{eq:max}
    \textsc{max}
    \ket{j_1}\ket{j_2}\cdots\ket{j_d} \ket{0^n}
    :=
    \ket{j_1}\ket{j_2}\cdots\ket{j_d} \ket{j_{\max}},
\end{equation}
and then apply $U^\pm$ on the ancilla register followed by uncomputing this register;
see the circuit in Fig.~\ref{fig:qCircs}(c).

Having an implementation for $U^\pm_{d\text{D}}$, the circuit for generating the solution state for PDEs becomes similar to that of ODEs shown in Fig.~\ref{fig:qCircs}(a).
Specifically, the $n$-qubit register in Fig.~\ref{fig:qCircs}(a) encoding~$\ket{\bm{b}}$ is replaced with a $dn$-qubit register encoding $\ket{\bm{b}}$ for a PDE,
the QWT $W$ is replaced with $d$ parallel QWTs and $U^\pm$ is replaced with~$U^\pm_{d\text{D}}$.
The block-encoding \textsc{qmi} and amplitude amplification~$\mathcal{A}$ in the circuit remain operationally unchanged; \textsc{qmi} now block encodes the inverse of the preconditioned matrix associated with a PDE.

\textit{Complexity}.---We now analyze our algorithm's complexity.
The algorithm generates the state $\ket{\psi}$ in Eq.~\eqref{eq:psi} with $\ket{\psi_{ab}}$ in Eq.~\eqref{eq:psiabODEs} for ODEs and in Eq.~\eqref{eq:psiabPDEs} for~PDEs.
First we analyze the complexity of generating $\ket{\psi}$ for ODEs.
As per Fig.~\ref{fig:qCircs}~(a), generating this state requires performing four main operations:
$W$ and its inverse;
controlled-$U^\pm$;
block-encoding \textsc{qmi};
and amplitude amplification $\mathcal{A}$.
The QWT~$W$ on $n$ qubits can be executed by $\mathcal{O}(n^2)$ gates~\cite{BA24,FW99}.
By Eq.~\eqref{eq:Udecomp},
$U^\pm$ is a product of multi-control rotations, and so is the controlled-$U^\pm$ but with an extra control.
This observation yields the following lemma.

\begin{lemma}
\label{lemma:cUpm}
    controlled-$U^\pm$ with $U^\pm$ in Eq.~\eqref{eq:Udecomp} can be executed by $\mathcal{O}(n)$ Toffoli, one- and two-qubit gates, and $n$ ancilla qubits.
\end{lemma}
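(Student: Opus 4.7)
The plan is to exploit the nested structure of the control sets in the decomposition of Eq.~\eqref{eq:Udecomp}. Adding the external control qubit $c$ to each factor turns the $r$-th factor into an $r$-fold controlled $R_z(\pm\theta_{n-r})$, whose control set $\{c, q_1, \ldots, q_{r-1}\}$ is strictly contained in the control set of the $(r+1)$-th factor, where $q_1,\ldots,q_n$ are the target qubits of $U^\pm$. This nesting allows the AND of the control bits to be computed incrementally into an ancilla register, so that each multi-controlled rotation reduces to a singly-controlled one, amortising the cost across the whole product.

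Concretely, I would introduce $n-1 \leq n$ ancilla qubits $a_1, \ldots, a_{n-1}$ initialised to $\ket{0}$. First, copy $c$ into $a_1$ with one CNOT, then for $r = 2, \ldots, n-1$ compute $a_r = a_{r-1} \wedge \bar{q}_{r-1}$ using a single Toffoli with a $\ket{0}$-control on $q_{r-1}$ (conjugated by two $X$ gates if only $\ket{1}$-controlled Toffolis are available). At this point each $a_r$ encodes exactly the firing condition of the $r$-th factor, so applying each rotation $R_z(\pm\theta_{n-r})$ on $q_r$ singly-controlled on $a_r$ realises the desired operator; each singly-controlled $R_z$ takes $\mathcal{O}(1)$ one- and two-qubit gates. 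I would then uncompute the ancilla chain with the reverse sequence of $n-2$ Toffoli gates and one CNOT to restore the ancillas to $\ket{0}$.

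Counting resources gives $2(n-2)$ Toffolis for building and uncomputing the chain, $n-1$ singly-controlled rotations, and $\mathcal{O}(n)$ single-qubit $X$ gates to convert $\ket{0}$-controls, totalling $\mathcal{O}(n)$ gates of the allowed types with $n-1 \leq n$ ancillas, as required. There is no real obstacle beyond this bookkeeping: the construction is the standard AND-chain trick for a cascade of multi-controlled gates whose control sets are nested, and it is precisely this nesting, guaranteed by the structure of Eq.~\eqref{eq:Udecomp}, that keeps the overall complexity linear rather than quadratic in~$n$.
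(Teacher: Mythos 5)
Your construction is correct and is essentially the same as the paper's: both convert the $\ket{0}$-controls with Pauli-$X$ conjugation and exploit the nesting of the control sets to compute the conjunctions incrementally into an ancilla chain with $\mathcal{O}(n)$ Toffolis, reduce each multi-controlled rotation to a singly-controlled $R_z$ on the corresponding ancilla, and uncompute. The resource count ($\mathcal{O}(n)$ Toffoli, one- and two-qubit gates with at most $n$ ancillas) matches the paper's.
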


The block-encoding \textsc{qmi} can be constructed using existent QLSAs~\cite{CKS17,GSL+19} by $\mathcal{O}(\kappa_p\log(\kappa_p/\varepsilon))$ calls to a block-encoding of $A_P$;
see Appendix~\ref{apx:detailedComplexity}.
We use $A_P= PWAW^\dagger P$ to construct a block-encoding of $A_P$ by one use of the block-encoding of~$A$, and $\mathcal{O}(n^2)$ gates (Appendix~\ref{apx:BEncode_Ap}).
This block-encoding and $\kappa_p\in \mathcal{O}(1)$ in our application yield Lemma~\ref{lemma:qmi}.
We remark that the bounded condition number can also be used to improve the complexity of existent QLSAs (see Appendix~\ref{apx:qlsa}).

\begin{lemma}
\label{lemma:qmi}
The \textsc{qmi} in Eq.~\eqref{eq:qmi} can be executed by $\mathcal{O}(\log(1/\varepsilon))$ uses of the block-encoding of~$A$ and $\mathcal{O}(n^2)$ gates.
\end{lemma}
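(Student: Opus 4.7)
\textit{Proof proposal for Lemma~\ref{lemma:qmi}.}---The plan is to build a block-encoding of $A_p=PWAW^TP$ from the given block-encoding of $A$ and then invert it with a standard QSVT-based QLSA. For any $(1,a,0)$-block-encoding of a matrix $B$ with condition number $\kappa_B$, the QSVT-based QLSA~\cite{GSL+19,MRT+21} produces an $\varepsilon$-accurate block-encoding of $B^{-1}/\kappa_B$ using $\mathcal{O}(\kappa_B \log(\kappa_B/\varepsilon))$ queries to the block-encoding of $B$ and a comparable number of one- and two-qubit phase gates. Specialising to $B=A_p$ and invoking the uniform wavelet bound $\kappa_p=\mathcal{O}(1)$ from FIG.~\ref{fig:precond}, the query count collapses to $\mathcal{O}(\log(1/\varepsilon))$ calls to the block-encoding of $A_p$.

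The remaining work is to construct a cheap block-encoding of $A_p$, which I would assemble in three composable steps. First, the QWT $W$ admits an $\mathcal{O}(n^2)$-gate implementation~\cite{FW99,hoy97}; treating $W$ and $W^T$ as $(1,0,0)$-block-encodings and sandwiching the given block-encoding of $A$ between them yields, by the product rule for block-encodings, a $(1,a,0)$-block-encoding of $WAW^T$ at a cost of one call to the block-encoding of $A$ and $\mathcal{O}(n^2)$ extra gates. Second, the diagonal preconditioner $P$ is realised as an LCU via the decomposition $P=(U^++U^-)/2$ of Eq.~\eqref{eq:Upm}; because $\norm{P}=1$ and each $U^\pm$ is a unitary implementable by the $\mathcal{O}(n)$-gate circuit of Eq.~\eqref{eq:Udecomp}, this gives a $(1,1,0)$-block-encoding of $P$ using $\mathcal{O}(n)$ gates and one ancilla. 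Third, composing two copies of the block-encoding of $P$ around the block-encoding of $WAW^T$, once more by the product rule, produces a $(1,a+2,0)$-block-encoding of $A_p$ using one call to the block-encoding of $A$ and $\mathcal{O}(n^2)$ extra gates per invocation, which is the construction detailed in Appendix~\ref{apx:BEncode_Ap}.

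Combining the two ingredients, the QSVT iteration makes $\mathcal{O}(\log(1/\varepsilon))$ calls to the block-encoding of $A_p$, each using one call to the block-encoding of $A$ together with $\mathcal{O}(n^2)$ additional gates, which gives the claimed resource counts for \textsc{qmi}. The main subtlety I expect to confront is bookkeeping the subnormalization and error accumulated when chaining the three block-encodings $P$, $WAW^T$, $P$: a careless composition could introduce a subnormalization strictly smaller than one, effectively multiplying the condition number seen by the QLSA by a factor that grows with $N$ and destroying the logarithmic scaling. This is circumvented precisely because $\norm{A}\leq 1$, $\norm{P}=1$, and each constituent block-encoding is exact, so the composite block-encoding of $A_p$ inherits unit subnormalization and zero error; the effective condition number entering QSVT is $\kappa_p=\mathcal{O}(1)$, and the target accuracy $\varepsilon$ is reached after at most a constant rescaling, preserving the logarithmic query bound.
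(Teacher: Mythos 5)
Your proposal is correct and follows essentially the same route as the paper: construct a zero-error $(1,\cdot,0)$-block-encoding of $A_p=PWAW^TP$ by sandwiching $U_A$ between the QWT circuits and the LCU block-encoding $U_P=(H\otimes\mathbbm{1}_n)\Lambda_0(U^+)\Lambda_1(U^-)(H\otimes\mathbbm{1}_n)$ of $P$ (one call to $U_A$, $\mathcal{O}(n^2)$ gates dominated by $W$), then invoke a QSVT/LCU-based QLSA whose $\mathcal{O}(\kappa_p\log(\kappa_p/\varepsilon))$ query count collapses to $\mathcal{O}(\log(1/\varepsilon))$ because $\kappa_p\in\mathcal{O}(1)$. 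The only cosmetic differences are ancilla bookkeeping (the paper reuses one of the $a$ ancillas for $U_P$ rather than adding two) and that the paper additionally presents a pedagogical HHL-style implementation before citing the advanced methods you use.
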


We now analyze the cost of amplitude amplification.
Let~$\ket{\Psi}$ be the input state to~$\mathcal{A}$ in Fig.~\ref{fig:qCircs}(a).
This state can be written as
$\ket{\Psi}=
\sqrt{p}\ket{0}_\texttt{flag}\!\ket{\text{G}}+
\sqrt{1-p}\ket{\text{B}}$, where $\ket{0}_\texttt{flag}\!\ket{\text{G}}$ is the `good' part of~$\ket{\Psi}$ whose amplitude is to be amplified and~$\ket{\text{B}}$ is the `bad' part with
$({}_\texttt{flag}\!\bra{0}\otimes\mathbbm{1}_{n+1})\ket{\text{B}}=0$.
The success probability~$p$ is $\mathcal{O}(1)$ because \textsc{qmi} block encodes the inverse of a matrix with $\kappa\in\mathcal{O}(1)$.
Although $p$ is constant, its value is unknown.
Consequently, only~$\mathcal{O}(1)$ rounds of amplifications are needed on average to boost $p$ to say $2/3$~\cite[Theorem~3]{BHM+02}.
Each round can be implemented by a reflection about $\ket{\Psi}$ and a reflection about~$\ket{0}_\texttt{flag}$~\cite{BHM+02}.
The latter reflection is independent of~$n$, and the former is achieved by performing the inverse of operations that generate $\ket{\Psi}$ from the all-zero state $\ket{0^{n+2}}$, reflecting about it, and then performing the state-generation operations.
Given access to a procedure~$\mathcal{P}_{\bm{b}}$ that generates $\ket{\bm{b}}$, $\ket{\Psi}$ is generated using $\mathcal{P}_{\bm{b}}$,~$H$, $W$, controlled-$U^\pm$ and \textsc{qmi}.
Lemma~\ref{lemma:AA} follows from the above analysis;
the proof and detailed implementation of~$\mathcal{A}$ is given in Appendix~\ref{apx:detailedComplexity}.
Altogether, the overall cost to generate the solution state for ODEs scales as $\mathcal{O}(n^2)$ with $n$, which is polylogarithmic in~$N=2^n$.

\begin{lemma}
\label{lemma:AA}
    The amplitude amplification $\mathcal{A}$ in Fig.~\ref{fig:qCircs}(a) can be executed by $\mathcal{O}(1)$ uses of $\mathcal{P}_{\bm{b}}$, $\mathcal{O}(\log(1/\varepsilon))$ uses of the block-encoding of~$A$ and $\mathcal{O}(n^2)$ gates, all on average.
\end{lemma}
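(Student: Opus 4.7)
The plan is to denote by $\mathcal{U}_\Psi$ the circuit in FIG.~\ref{fig:qCircs}(a) with the amplitude-amplification block $\mathcal{A}$ removed, so that $\mathcal{U}_\Psi\ket{0^{n+2}}=\ket{\Psi}=\sqrt{p}\,\ket{0}_\texttt{flag}\ket{\text{G}}+\sqrt{1-p}\,\ket{\text{B}}$, and to combine three ingredients: a uniform lower bound $p=\Omega(1)$, the Boyer--Brassard--H{\o}yer--Mosca amplification for unknown but constant success probability, and an explicit gate count for a single invocation of $\mathcal{U}_\Psi$.

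To establish $p=\Omega(1)$, I would argue as follows. Taking the \textsc{qmi} normalization $\alpha=\lVert A_p^{-1}\rVert$, a single call maps $\ket{0^{a}}\ket{\varphi}$ to $\alpha^{-1}\ket{0^{a}}\otimes A_p^{-1}\ket{\varphi}+(\text{rest})$ up to the $\varepsilon$-error of Eq.~\eqref{eq:qmi}, so the flag-zero amplitude after \textsc{qmi} on any unit input $\ket{\varphi}$ is at least $\lVert A_p^{-1}\ket{\varphi}\rVert/\lVert A_p^{-1}\rVert\geq 1/\kappa_p$. Since $\kappa_p\in\mathcal{O}(1)$ by the wavelet preconditioning, and the surrounding operations ($W$, $W^\dagger$, controlled-$U^{\pm}$, the two Hadamards, $\mathcal{P}_{\bm{b}}$) are norm-preserving, the overall success amplitude $\sqrt{p}$ is $\Omega(1)$ uniformly in~$n$. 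The approximation error is absorbed by choosing $\varepsilon$ a sufficiently small absolute constant, which only affects the hidden constant inside the already-present $\log(1/\varepsilon)$ factor.

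Given $p=\Omega(1)$ with unknown exact value, \cite[Theorem~3]{BHM+02} guarantees that an average of $\mathcal{O}(1)$ iterations of the Grover-type operator $Q=-R_\Psi R_\texttt{flag}$ amplify $p$ above~$2/3$. The reflection $R_\texttt{flag}$ about $\ket{0}_\texttt{flag}$ is a single-qubit operation; the reflection $R_\Psi=\mathcal{U}_\Psi R_{0}\mathcal{U}_\Psi^{\dagger}$ uses the $(n{+}2)$-qubit reflection $R_0$ about $\ket{0^{n+2}}$, which costs $\mathcal{O}(n)$ gates with one clean ancilla, sandwiched between $\mathcal{U}_\Psi$ and its inverse. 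Tabulating $\mathcal{U}_\Psi$ gives: one use of $\mathcal{P}_{\bm{b}}$; two Hadamards on the $\ket{ab}$ register; the QWTs $W$ and $W^{\dagger}$ at $\mathcal{O}(n^{2})$ gates each \cite{FW99,hoy97}; two controlled-$U^{\pm}$ at $\mathcal{O}(n)$ gates each by Lemma~\ref{lemma:cUpm}; and one \textsc{qmi}, which by Lemma~\ref{lemma:qmi} uses $\mathcal{O}(\log(1/\varepsilon))$ block-encodings of $A$ plus $\mathcal{O}(n^{2})$ auxiliary gates. The adjoint $\mathcal{U}_\Psi^{\dagger}$ has identical cost. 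Multiplying by the expected $\mathcal{O}(1)$ amplification rounds yields the stated totals.

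The main obstacle is the uniform lower bound on $p$: one must confirm that it survives both the LCU decomposition $P=(U^{+}+U^{-})/2$, which introduces averaging over the four branches $\ket{\psi_{ab}}$ and hence a $\tfrac{1}{4}$ prefactor into $\xi^{2}$ of Eq.~\eqref{eq:psi}, and the $\varepsilon$-approximation inside \textsc{qmi}. Since both are constant-factor effects that are uniform in~$n$, they can be absorbed into the $\Omega(1)$ constant on $p$; everything else is standard circuit bookkeeping. I would finish by explicitly writing $\xi^{2}$ in terms of $\lVert A_p^{-1}U^{b}W\ket{\bm{b}}\rVert^{2}$ and invoking $\lVert A_p^{-1}\rVert^{-1}\leq \lVert A_p^{-1}\ket{\varphi}\rVert \leq \lVert A_p^{-1}\rVert$ together with $\kappa_p = \mathcal{O}(1)$ to certify $\xi = \Omega(1)$.
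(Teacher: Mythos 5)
Your proposal is correct and follows essentially the same route as the paper's proof in Appendix~\ref{apx:qaa}: the lower bound $p\geq 1/\kappa_p^2$ from the spectrum of $A_p^{-1}$ lying in $[1,\kappa_p]$, the invocation of \cite[Theorem~3]{BHM+02} for unknown constant success probability, the decomposition of each round into a flag-qubit reflection and $U_\Psi R_{n+2} U_\Psi^\dagger$, and the same per-component gate tabulation ($\mathcal{O}(n^2)$ for $W$, $\mathcal{O}(n)$ for controlled-$U^\pm$ and the multiply-controlled reflection, $\mathcal{O}(\log(1/\varepsilon))$ queries for \textsc{qmi}). No substantive differences.
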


Note that the overall cost to generate the solution state for ODEs is determined by the cost of $W$ and controlled-$U^\pm$.
For PDEs, these operations are replaced with their $d$D versions: $W_{d\text{D}}$ and controlled-$U_{d\text{D}}^\pm$.
Similar to ODEs, these operations determine the overall cost to generate the solution state for PDEs.
The cost of $W_{d\text{D}}=\otimes_{i=1}^d W$ is $d$ times the cost of~$W$.
By Fig.~\ref{fig:qCircs}(c), the cost of controlled-$U_{d\text{D}}^\pm$ is obtained by adding the cost of controlled-$U^\pm$ in Lemma~\ref{lemma:cUpm} and \textsc{max} in Lemma~\ref{lemma:max}.

\begin{lemma}
\label{lemma:max}
    The operation \textsc{max} in Eq.~\eqref{eq:max} can be executed using $\mathcal{O}(dn)$ Toffoli gates and $\mathcal{O}(n)$ ancillary qubits.
\end{lemma}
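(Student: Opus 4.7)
I would prove the lemma by reducing the $d$-fold maximum to $d-1$ iterated pairwise-max updates, each implemented reversibly in $O(n)$ Toffoli gates using a fixed $O(n)$-size ancilla workspace that is reused across iterations. Concretely, initialise the $n$-qubit output register $R$ by $R\leftarrow j_1$ via $n$ CNOT gates, and then, for $i=2,\ldots,d$, update $R\leftarrow\max(R,j_i)$ in place while preserving the input register $j_i$. If each update achieves $O(n)$ Toffolis and $O(n)$ clean ancillas, the totals $O(dn)$ and $O(n)$ follow by summation.

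\emph{Pairwise update.} For a single step, I would use a standard ripple-carry comparator that, given the two $n$-qubit registers $R$ and $j_i$ (both preserved), computes a flag $f=[\,j_i\ge R\,]$ into one ancilla qubit using $O(n)$ Toffolis, with its $O(n)$ carry ancillas uncomputed inside the comparator itself. An auxiliary $n$-qubit scratch register $S$ is then loaded with $j_i$ via $n$ CNOTs and controlled-swapped with $R$ conditioned on $f$ through $n$ Fredkin gates, so that afterwards $R$ holds $\max(R_{\text{old}},j_i)$ and $S$ holds $\min(R_{\text{old}},j_i)$. The flag $f$ is then uncomputed by re-running the comparator on the post-swap pair $(R,S)$—under the tie-safe $\ge$ convention the re-computed flag coincides with $f$—and the scratch $S$ is cleared by XORing $j_i$ out after a second conditional compare-and-swap that returns $S$ to the state it held before the pairwise update. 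Each of these pieces is a constant number of $O(n)$-Toffoli primitives.

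\emph{Ancilla accounting.} The concurrently live ancillary workspace consists of the $n$-qubit scratch $S$, the $O(n)$ comparator carry qubits (all clean outside comparator calls), and $O(1)$ flag qubits, totalling $O(n)$. Because every ancilla is returned to $\ket{0}$ at the end of each iteration, the same block is reused for all $d-1$ iterations, and the ancilla count does not grow with $d$; summing the per-iteration Toffoli costs yields $O(dn)$ Toffolis overall.

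\emph{Main obstacle.} The delicate step is the clean uncomputation of the scratch $S$ and flag $f$ after the in-place update, since $R$ has been overwritten and a naive inversion of the forward comparator applied to the original $R$ is no longer available. This is exactly where the compare-and-swap structure, together with the tie-safe $\ge$ convention, is essential: it guarantees that the comparator applied to the post-swap pair reproduces the same $f$ regardless of which branch was taken, and that a symmetric second compare-and-swap on $(R,S)$ returns $S$ to $\ket{0^n}$. Once this bookkeeping is verified, the per-iteration bound of $O(n)$ Toffolis with $O(n)$ clean ancillas is immediate, and the lemma follows.
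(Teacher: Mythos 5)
Your overall strategy---reduce the $d$-fold maximum to $d-1$ pairwise maxima, each costing $O(n)$ Toffolis---is sound for the gate count, but the per-iteration cleanup that is supposed to deliver the $O(n)$ ancilla bound does not work, and this is exactly the step you flag as the main obstacle. The in-place update $(R,j_i)\mapsto(\max(R,j_i),\,j_i)$ is not injective: when $f=1$, the old value $R_{\mathrm{old}}=\max(j_1,\dots,j_{i-1})$ survives only in the scratch register $S$ (as the discarded minimum), and it is not a function of the data you retain, namely $j_i$ and the updated $R$. Hence no unitary acting on $R$, $S$, $j_i$ and clean ancillas can return $S$ to $\ket{0^n}$ while keeping the updated $R$; the ``second conditional compare-and-swap that returns $S$ to the state it held before the pairwise update'' necessarily also returns $R$ to $R_{\mathrm{old}}$, undoing the very update you are trying to keep. (A smaller error: recomputing the flag from the post-swap pair $(R,S)$ compares a maximum with a minimum and so yields a constant, not $f$; the comparison that does reproduce $f$ is $[\,j_i\ge R_{\mathrm{new}}\,]$ against the preserved input $j_i$. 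But even with $f$ correctly uncomputed, $S$ remains garbage.) If you instead keep each pairwise minimum as garbage and clean everything at the end by compute--copy--uncompute, the live workspace grows to $d-1$ scratch registers, i.e.\ $O(dn)$ ancillas rather than $O(n)$.

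The paper avoids this issue by never computing in place: each pairwise maximum is written into a \emph{fresh} $n$-qubit temporary register via the injective map $\textsc{cadd}:(x,y,f,0^n)\mapsto(x,y,f,\max(x,y))$, with the pairs organized as a binary tree of depth $O(\log_2 d)$ using at most $d$ temporaries, which are then uncomputed. That construction attains $O(dn)$ Toffolis but, as the appendix itself states, uses $O(dn)$ ancilla qubits---so the $O(n)$ ancilla count in the lemma statement is achieved neither by your argument nor by the paper's. To make your sequential route rigorous you must either accept $O(dn)$ ancillas, or pay extra gates to recompute $R_{\mathrm{old}}$ from $j_1,\dots,j_{i-1}$ when uncomputing (reversible pebbling), or switch to a method that is injective with $O(n+d)$ workspace, such as a most-significant-bit-first tournament.
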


Altogether, we have the following theorem for the cost of generating a quantum solution for $d$D PDEs; $d=1$ for ODEs.

\begin{theorem}
\label{theorem}
Let $\mathcal{L}u(\bm{x})=b(\bm{x})$ be a $d$\textup{D} inhomogeneous linear PDE on the domain $[0,1]^d$ with periodic boundaries,
where the bilinear form of $\mathcal{L}$ is symmetric, bounded and elliptic;
see conditions~(\ref{cond1}--\ref{cond3}).
Let $A\bm{u}=\bm{b}$ with
$A \in\mathbb{R}^{N\times N}$
and $\bm{u,b}\in \mathbb{R}^N$ be the linear system obtained by the finite-difference method on a $d$\textup{D} grid with $N=2^{nd}$ points.
For~$\varepsilon>0$ and given access to a $(1,a,0)$-block-encoding $U_A$ of~$A$ and a procedure~$\mathcal{P}_{\bm{b}}$ that generates $\ket{\bm{b}}$, an $\varepsilon$-approximation of the solution state~$\ket{\psi}$ in Eq.~\eqref{eq:psi} can be generated by $\mathcal{O}(1)$ uses of $\mathcal{P}_{\bm{b}}$, $\mathcal{O}(\log(1/\varepsilon))$ uses of $U_A$ and $\mathcal{O}(dn^2)$ gates, all on average.
\end{theorem}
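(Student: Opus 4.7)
My plan is to compute the cost of the state-preparation circuit in FIG.~\ref{fig:qCircs}(a) ingredient by ingredient using the lemmas already established, and to handle the $d$D case by substituting $W\to W_{d\text{D}}$ and $U^\pm\to U^\pm_{d\text{D}}$. The circuit has four distinct ingredients to account for: (i)~two wavelet transforms (one forward and one inverse), (ii)~two controlled-$U^\pm_{d\text{D}}$ gates, (iii)~one invocation of \textsc{qmi}, and (iv)~an outer amplitude-amplification loop $\mathcal{A}$. Summing these, multiplied by the expected cost of $\mathcal{A}$, should give the totals stated in the theorem.

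For (i) I would use the tensor-product factorization $W_{d\text{D}}=W^{\otimes d}$ together with the $\mathcal{O}(n^2)$-gate 1D QWT of Refs.~\cite{FW99,hoy97}, giving $\mathcal{O}(dn^2)$ gates per transform. For (ii) I would apply the construction in FIG.~\ref{fig:qCircs}(c): compute $j_{\max}$ into an $n$-qubit ancilla by \textsc{max} at cost $\mathcal{O}(dn)$ Toffolis via Lemma~\ref{lemma:max}, apply the controlled-$U^\pm$ of Lemma~\ref{lemma:cUpm} to that ancilla at cost $\mathcal{O}(n)$ gates, then uncompute, totalling $\mathcal{O}(dn)$ gates. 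For (iii) I would invoke Lemma~\ref{lemma:qmi}, which gives an $(\alpha,a,\varepsilon)$-block-encoding of $A_p^{-1}$ using $\mathcal{O}(\log(1/\varepsilon))$ calls to $U_A$ and $\mathcal{O}(n^2)$ other gates. The crucial input absorbed by this lemma is $\kappa_p\in\mathcal{O}(1)$, which is the quantitative payoff of the wavelet preconditioner under conditions~(\ref{cond1}--\ref{cond3}) and the optimality result of Refs.~\cite{DK92,dah01,CDD+01}.

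For (iv) I would cite Lemma~\ref{lemma:AA}: because $\kappa_p\in\mathcal{O}(1)$ the flag-qubit success probability of the unamplified circuit is bounded below by a constant, so the exponential-search scheme of Ref.~\cite{BHM+02} terminates in an expected $\mathcal{O}(1)$ rounds, each round costing one forward and one inverse of the state-preparation subcircuit plus two $\mathcal{O}(1)$-cost reflections about $\ket{0}$. Putting the pieces together, the state-preparation subcircuit itself uses $\mathcal{O}(1)$ calls to $\mathcal{P}_{\bm{b}}$, $\mathcal{O}(\log(1/\varepsilon))$ calls to $U_A$, and $\mathcal{O}(dn^2)$ other gates, and amplification multiplies each by an expected $\mathcal{O}(1)$, yielding the totals in the theorem. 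The main obstacle I anticipate is the end-to-end error budget: the only approximate ingredient in the circuit is \textsc{qmi}, and its $\varepsilon$-error is conjugated by the exact unitaries $W_{d\text{D}}$ and $U^\pm_{d\text{D}}$ and further compounded by the expected $\mathcal{O}(1)$ amplification rounds. Since the surrounding operations are unitary, the error on the final state is bounded above by a constant multiple of the \textsc{qmi} error, so setting the internal tolerance of \textsc{qmi} to $\Theta(\varepsilon)$ suffices and only changes the $\log(1/\varepsilon)$ factor by a constant.
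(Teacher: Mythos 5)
Your proposal is correct and follows essentially the same route as the paper: it assembles Lemmas~\ref{lemma:cUpm}--\ref{lemma:max} exactly as the paper's complexity analysis does, handles the $d$D case by the same substitutions $W\to W^{\otimes d}$ and $U^\pm\to U^\pm_{d\text{D}}$ via \textsc{max}, and invokes the same $\kappa_p\in\mathcal{O}(1)$ bound both for \textsc{qmi} and for the $\mathcal{O}(1)$ expected number of amplification rounds, with the same observation that the \textsc{qmi} tolerance set to $\Theta(\varepsilon)$ controls the final-state error since the surrounding operations are unitary. The only quibble is that the reflection about $\ket{0^{n+2}}$ inside each amplification round costs $\mathcal{O}(n)$ gates (an $(n+3)$-bit Toffoli), not $\mathcal{O}(1)$, but this is absorbed into the $\mathcal{O}(dn^2)$ total.
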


\textit{Non-periodic boundaries}.---The periodic boundary condition considered so far allows using the standard wavelet transformation for preconditioning. For non-periodic boundaries, wavelets defined on bounded domains with appropriate boundary wavelets can be used to enforce the desired boundary condition, which would need incomplete wavelet transforms~\cite{CDV93,NHT+96,HM18}. However, a simple way to apply the proposed algorithm for PDEs with non-periodic boundaries is to use the method of images to construct a suitable periodic extension of the problem in an extended domain. Solving the PDE with periodic boundaries in the extended domain then yields the correct solution on the original domain (see Appendix~\ref{apx:periodic_extension} for details).

\textit{The wavelet approach}.---In the hybrid approach described above, the finite-difference method is used for discretization and wavelets serve as an auxiliary basis for preconditioning.
An alternative approach, which we call it the wavelet approach, is to discretize the PDE directly in a wavelet basis.
In this case, the differential operator~$\mathcal{L}$ and the inhomogeneity~$b(\bm{x})$ are already represented in the wavelet basis, so the QWT and its inverse in Fig.~\ref{fig:qCircs} are not needed.
As a result, the gate cost in Theorem~\ref{theorem} is reduced from~$\mathcal{O}(n^2)$ to~$\mathcal{O}(n)$, as the~$\mathcal{O}(n^2)$ cost is due to the QWT~(Appendix~\ref{apx:detailedComplexity}).
However, the wavelet approach requires that the block-encoding~$U_A$ and the procedure $\mathcal{P}_{\bm{b}}$ correspond to $A$ and $\bm{b}$ represented in the wavelet basis.
As in the hybrid approach, the condition number of the preconditioned matrix in the wavelet approach is effectively constant.
Fig.~\ref{fig:comparison} compares the condition numbers in both approaches.

\begin{figure*}
    \centering
    \includegraphics[width=\linewidth]{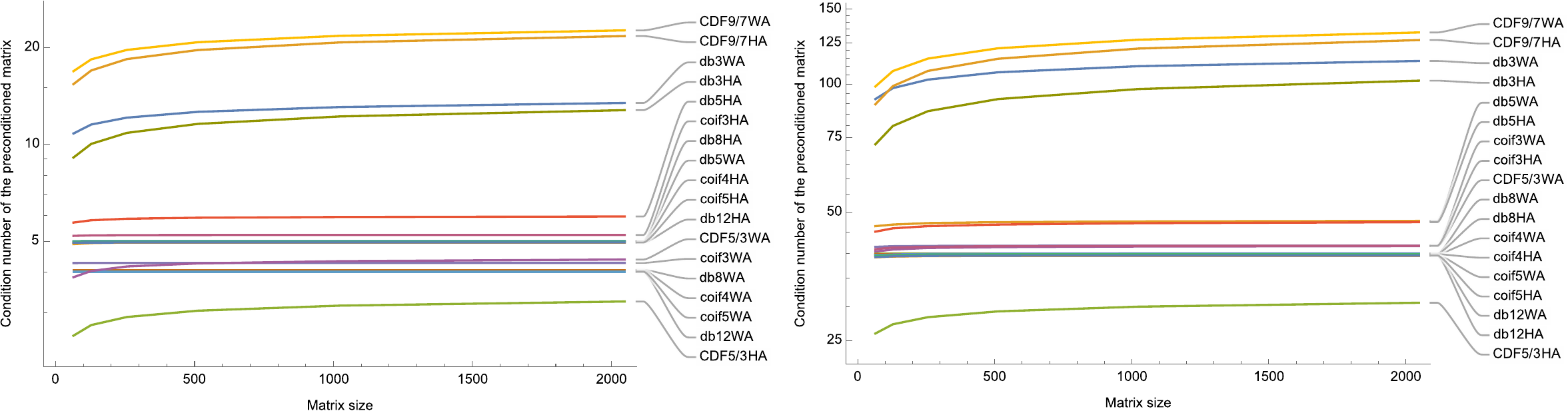}
    \caption{Condition numbers of the preconditioned matrix in the hybrid approach~(HA) vs the wavelet approach~(WA)
    for $\mathcal{L}_1$~(Left) and $\mathcal{L}_2$~(Right), given in Eq.~\eqref{eq:example_PDEs}, using various wavelets.
    The CDF wavelet of type ``5/3'' with the hybrid approach yields the lowest condition~number.
    }
    \label{fig:comparison}
\end{figure*}

\textit{Conclusion and discussion}.---We have presented a quantum algorithm for a broad class of inhomogeneous PDEs, with a runtime that is independent of the condition number of matrices involved in the computation.
Specifically, the algorithm runs in polylogarithmic time with respect to the matrix dimension and prepares a quantum state from which features of the solution vector can be efficiently extracted, thereby achieving
an exponential speedup over classical methods.
The algorithm applies to the class of PDEs for which wavelets provide an optimal preconditioner.
This class includes prominent examples such as Sturm-Liouville problems and second-order linear and elliptic PDEs with constant or slowly varying coefficients.
As a result, the algorithm provides a promising pathway toward realizing quantum advantage in practical applications.

The hybrid approach we used demonstrates that wavelets can serve as an auxiliary basis in which the condition numbers of matrices obtained from standard finite-difference discretization are effectively constant, thereby substantially reducing the cost of quantum matrix operations.
This approach could have a broad impact on quantum algorithms research.
In particular, it could provide a practical way to boost the performance of quantum simulation algorithms, such as those used in quantum chemistry~\cite{KJL+08,MFZ+22}, where standard discretization methods are used.
Our approach also provides a practical way to accurately estimate the condition number of large-size systems, which is often assumed as input to quantum algorithms~\cite{HHL09,CKS17}.

We note that our algorithm makes no assumptions about the inhomogeneity $b(\bm{x})$,
in contrast to those in Refs.~\cite{TAW+21,ALW+22}, which exploit specific structure in $b(\bm{x})$ to achieve a cost independent of the condition number.
We also note that although the condition number becomes constant by preconditioning, its value may still be large.
Choosing a suitable wavelet, as illustrated in Fig.~\ref{fig:precond}, or tailoring the diagonal preconditioner to the specific problem, as in Refs.~\cite{CDD+01,bey21}, can further decrease the condition number. Instantiating the proposed algorithm for concrete applications by constructing the input oracles and performing full implementations remains an important direction for future research.

\textit{Acknowledgement}.---MB thanks Joshua T. Cantin and Philipp Schleich for helpful comments.
MB and AAG acknowledge the generous support and funding of this project by the Defense Advanced Research Projects Agency (DARPA) under Contract No. HR0011-23-3-0021. Any opinions, findings and conclusions or recommendations expressed in this material are those of the author(s) and do not necessarily reflect the views of the Defense Advanced Research Projects Agency.
AAG also acknowledges support from the Canada 150 Research Chairs program and NSERC-IRC. AAG also acknowledges the generous support of Anders G. Frøseth.
NW acknowledges funding for this work from the US DOE National Quantum Information Science Research Centers, Co-design Center for Quantum Advantage (C2QA) under contract number DE-SC0012704.
G.K.B. acknowledges support from the Australian Research Council Centre of Excellence for Engineered Quantum Systems (Grant No. CE 170100009).


\bibliography{references}

\onecolumngrid
\appendix

\section{Oracles for the observable in the extended space}
\label{app:oraclesM}

In this appendix, we construct oracles that specify the observable $M'$ in the extended space, given in Eq.~\eqref{eq:Mprime}, using the oracles specifying the observable $M$.
Let
\begin{align}
    &O^{\text{loc}}_M: \ket{j}_\texttt{row} \ket{\ell}_\texttt{indx} \ket{0}_\texttt{col}
    \mapsto \ket{j}_\texttt{row} \ket{\ell}_\texttt{indx} \ket{\text{loc}(j,\ell)}_\texttt{col}, \\
    &O^{\text{val}}_M: \ket{j}_\texttt{row} \ket{k}_\texttt{col} \ket{0}_\texttt{val}
    \mapsto \ket{j}_\texttt{row} \ket{k}_\texttt{col} \ket{M_{jk}}_\texttt{val},
\end{align}
be the oracles that specify the location and value of nonzero entries of $M$,
where the function $\text{loc}(j,\ell)$ returns the column index of the $\ell$th nonzero in the $j$th row of $M$.
Then the oracles specifying $M'$ can be constructed using a single query to the above oracles~as
\begin{align}
    &O^{\text{loc}}_{M'}: \ket{a,b,j}_\texttt{row} \ket{c,d,\ell}_\texttt{indx} 
    \ket{0}_\texttt{col} 
    \mapsto \ket{a,b,j}_\texttt{row}  \ket{c,d,\ell}_\texttt{indx}  \ket{(2c+d)N+\text{loc}(j,\ell)}_\texttt{col} , \\
    &O^{\text{val}}_{M'}: \ket{a,b,j}_\texttt{row}  \ket{c,d,k}_\texttt{col}  \ket{0}_\texttt{val} 
    \mapsto \ket{a,b,j}_\texttt{row}  \ket{c,d,k}_\texttt{col}  \ket{ M_{jk}}_\texttt{val},
\end{align}
where the binary numbers $a,b,c$ and $d$ encode the location of the blocks in the block matrix $M'$ in Eq.~\eqref{eq:Mprime}.

\section{Direct approach for generating the solution state}
\label{apx:direct_approach}

In this section, we show that the direct approach for generating the state $\ket{\bm{u}}=A^{-1}\ket{\bm{b}}$ has a computational cost that scales at least linearly with the size $N$ of the matrix~$A$.
This state can be written as
$\ket{\bm{u}} = W^{\dagger} P A_P^{-1} P W \ket{\bm{b}}$.
In the direct approach, this state is generated as follows.
First perform the quantum wavelet transformation $W$ on $\ket{\bm{b}}$
to obtain $\ket{\bm{b}_W} := W\ket{\bm{b}}$.
Then apply~$P$ on this state to generate the state
$\ket{\bm{b}_P} := P \ket{\bm{b}_W}$ and afterward apply $A^{-1}_P$ on $\ket{\bm{b}_P}$ to obtain $\ket{\bm{u}_W} := P \ket{\bm{u}_P}$.
Finally, perform the inverse of $W$
to have $\ket{\bm{u}} = W^{\dagger}\ket{\bm{u}_W}$.

The preconditioner $P$ and the inverse of the preconditioned matrix $A^{-1}_P$ are not unitary operations, so these operations need to be implemented by a unitary block encoding.
Because the condition number $\kappa_p$ of~$A_P$ is bounded by a constant number, the success probability of the block-encoding of $A^{-1}_P$ is $\Omega(1)$; see Appendix~\ref{apx:qmi} and~\ref{apx:qaa}.
The constant success probability enables efficient implementation for~$A^{-1}_P$.
In contrast, the condition number of the preconditioner~$P$ scales as $\Theta(N)$, causing the success probability of its probabilistic implementation to scale as~$\Omega(1/N^2)$, as we show in the following.
Let $U_P$ be the $(1,1,0)$-block-encoding of $P$ in Eq.~\eqref{eq:U_P}, then we have
\begin{equation}
    U_P\ket{0} \ket{\bm{b}_W} = \ket{0} P\ket{\bm{b}_W} + \ket{\phi},
\end{equation}
where $\ket{\phi}$ is a state such that $(\bra{0}\otimes\mathbbm{1}_n)\ket{\phi}=0$.
If we measure the single qubit in the computational basis and obtain $\ket{0}$, then a normalized version of the state $P\ket{\bm{b}_W}=\ket{\bm{b}_P}$ is prepared on the second register.
The success probability $p_\text{succ}$ to obtain $\ket{0}$ as a result of the measurement is
\begin{equation}
    p_\text{succ} = \norm{P\ket{\bm{b}_W}}^2
    \geq \min_{\ket{\bm{b}_W}} \norm{P\ket{\bm{b}_W}}^2
    = \lambda^2_{\min}(P) = (2/N)^2,
\end{equation}
where $\lambda_{\min}(P)$ is the smallest eigenvalue of $P$.
Here we used $\lambda_{\min}(P)=2/N$ as per Fig.~\ref{fig:precond}(a).
We also have $\lambda_{\max}(P)=1$, so the condition number of $P$ scales as $\Theta(N)$.
To boost the success probability to $\Omega(1)$, we need $1/\sqrt{p_\text{succ}}$ rounds of amplitude amplifications.
This number of amplifications makes the computational cost of the direct approach for generating $\ket{\bm{u}}$ at least linear in $N$.

\section{Periodic extension using the method of images}
\label{apx:periodic_extension}

Here we describe how the proposed quantum algorithm for periodic boundary condition can be adapted to the Neumann and Dirichlet boundary conditions using the method of images.
For simplicity, we consider one-dimensional PDEs on the unit interval $\Omega=[0,1]$ and assume homogeneous boundary conditions, i.e., $u(0)=u(1)=0$ for Dirichlet boundary conditions and $u'(0)=u'(1)=0$ for Neumann boundary conditions.
We further assume the inhomogeneity function $b(x)$ and the coefficient functions, in the case of variable-coefficient PDEs, are smooth across the boundary.
The periodic extension using the method of images in quantum algorithms for PDEs was explored in Refs.~\cite{CLO21,KAL+25}.
Our description here is included for completeness and to provide further details.

The method of images embeds the functions involved in a PDE (such as the solution function, the inhomogeneity, and the coefficient functions for PDEs with variable coefficients) into an extended domain with carefully chosen symmetry to enforce the desired boundary conditions.
Specifically, each function originally defined on $[0,1]$ is extended to $[-1,1]$ using either an even or odd extension, depending on the type of boundary condition (Dirichlet or Neumann).
This extended function is then periodically extended to all of $\mathbb{R}$.
The result is that the extended functions satisfy the periodic boundary condition and the differential operator becomes periodic as well.
That is, the finite-difference representation of the differential operator becomes a matrix with wrap-around terms at the boundaries.

More precisely, let $f(x)$ be any of the functions in the PDE defined on the domain $[0,1]$.
Its odd $\tilde{f}_-$ and even $\tilde{f}_+$ extensions to the domain~$[-1,1]$ are given by
\begin{equation}
    \tilde{f}_\pm(x) =
    \begin{cases}
        f(x) & x \in [0,1],\\
        \pm f(-x) & x\in [-1,0).
    \end{cases}
    \nonumber
\end{equation}
The odd extension is used for the Dirichlet boundary condition.
This extension makes the solution $\tilde{u}(x)$ antisymmetric about $x=0$, so it vanishes at the boundaries of the original domain, i.e., $\tilde{u}(0)=\tilde{u}(1)=0$, which satisfies the Dirichlet boundary condition.
For the Neumann boundary condition, the even extension is used. This makes the solution function symmetric about~$x=0$, so its derivative vanishes at boundaries of the original domain (i.e., $\tilde{u}'(0)=\tilde{u}'(1)=0$), matching the Neumann boundary condition.

Once the odd/even extension is constructed on $[-1,1]$, it is extended to all of $\mathbb{R}$ by $\tilde{f}(x+2)=\tilde{f}(x)$.
The result is a periodic and piecewise smooth function that satisfies the periodic boundary condition.
The PDE $\tilde{\mathcal{L}}\tilde{u}
(x)=\tilde{b}(x)$ on the extended domain then becomes equivalent to the PDE $\mathcal{L}u(x)=b(x)$ on the original domain, in the sense that the solution on the extended domain, when restricted to~$[0,1]$, recovers the solution to the original PDE.

If $N=2^n$ grid points are used for discretizing the original PDE on the unit interval, we now use $2N$ grid points over~$[-1,1]$ for discretizing the extended PDE. We thus need an extra qubit to encode the extended problem using quantum states. Let the vector $\bm{f} = (f_0,f_1,\ldots,f_{N-1})$ be the discretized version of a function $f(x)$ on the original domain. Then the discretized version of the extended function $\tilde{f}(x)$ is
$\tilde{\bm{f}}_\pm = (\pm f_{N-1},\ldots,\pm f_0, f_0,\ldots,f_{N-1})$ with the minus sign for the odd extension (Dirichlet boundary) and the plus sign for the even extension (Neumann boundary).

To apply the proposed algorithm to the periodized PDE in the extended domain, we construct a procedure $\mathcal{P}_{\tilde{\bm{b}}}$ for preparing the state $\ket{\tilde{\bm{b}}}:=\sum_{j=0}^{2N-1} \tilde{b}_j\ket{j}/\tilde{\norm{\bm{b}}}$ that encodes the vector $\tilde{\bm{b}}$, the discretized version of the extended inhomogeneity function~$\tilde{b}(x)$, on the amplitudes using the given procedure $\mathcal{P}_{\bm{b}}$ that prepares the state $\ket{\bm{b}}=\sum_{j=0}^{N-1}b_j\ket{j}/\norm{\bm{b}}$.
This is achieved simply by implementing the shift operation $\textsc{shift}: \ket{j} \mapsto \ket{(j+N-1) \bmod N}$ and simple quantum gates. 
Specifically, we have
\begin{equation}
    \ket{\tilde{\bm{b}}} =
    \frac1{\sqrt{2}}
    (\ket{1}\ket{\bm{b}}
    \pm\ket{0}\textsc{shift}\ket{\bm{b}}),
\end{equation}
where the plus sign is for the even extension and minus sign is for the odd extension.
The state with the plus sign is prepared by applying a Hadamard gate on the first qubit, $\mathcal{P}_{\bm{b}}$ on the remaining $n$ qubits followed by $\ket{0}$-controlled \textsc{shift} operation.
The state with the minus sign is achieved by applying a $Z$ gate followed by an $X$ gate before the controlled shift operation.

\section{Wavelet preconditioner for PDEs}
\label{apx:PdD}

In this appendix, we describe a way to construct the multidimensional diagonal wavelet preconditioner and obtain its action on basis states given in Lemma~\ref{lemma:PdD}.
We refer to~\cite{DK92,dah01,CDD+01} and~\cite[\S3.11]{coh03} for rigorous construction.
We also present a procedure for implementing the multidimensional preconditioner.

\subsection{Multidimensional diagonal preconditioner}
\label{apx:multiD}

Inspired by~\cite[\S3.1]{ABC+08}, we construct the $d$-dimensional diagonal preconditioner by analogy with the one-dimensional case.
The 1D preconditioner is constructed from the multi-scale property of a wavelet basis: the basis functions have an associated scale index~$s$, and the support of these functions is proportional to $2^{-s}$.
At the base scale $s=0$ and at higher scales $s>0$.
The number of basis functions at scale $s$ is two times the number of basis functions at scale $s-1$.
The 1D preconditioner in Fig.~\ref{fig:precond}(a) follows from these properties of a wavelet basis: each diagonal entry has a form $2^{-s}$.

By analogy with the 1D case, the $d$D diagonal preconditioner also has diagonal entries of the form $2^{-s}$ where $s$ represents a scale parameter.
The basis functions in $d$D are the tensor product of 1D basis functions, so the scale parameter in $d$D is determined by the largest scale parameter of 1D basis functions in the tensor product.
That is, if $s_1,s_2,\ldots,s_d$ are the scale parameters of each 1D basis function, then the scale parameter in $d$D is $s_{\max}=\max(s_1,\ldots,s_d)$.
Therefore, diagonal entries of the $d$D preconditioner have the form $2^{-s_{\max}}$.

The action of the $d$D preconditioner~$P_{d\text{D}}$ on basis states also follows by analogy with the action of the 1D preconditioner~$P$.
For the 1D case, the action on the $n$-qubit basis state can be written as $P\ket{j}=2^{-\floor{\log_2j}}\ket{j}$ with no action on $\ket{0}$.
Note that the floor function appears here because $j\in\{0,\ldots,2^n-1\}$ whereas $s\in\{0,\ldots,n-1\}$.
Similarly, the action of the $d$D preconditioner on the $d$D basis states $\ket{j_1}\ket{j_2}\cdots\ket{j_d}$,
the tensor product of 1D basis states, is
$P_{d\textup{D}} \ket{j_1}\ket{j_2}\cdots\ket{j_d}
= 2^{-\floor{\log_2 j_{\max}}} \ket{j_1}\ket{j_2}\cdots\ket{j_d}$,
where $j_{\max}:=\max(j_1,\ldots,j_d)$.
If $j_{\max}=0$, no action is performed.

\subsection{Implementing the multidimensional preconditioner}

The action of the 1D preconditioner $P$ on the $n$-qubit basis state $\ket{j}$ is $P\ket{j} = 2^{-\floor{\log_2j}}\ket{j}$ for all $j\in\{1,\ldots,2^n-1\}$;
$P$ has no effect on~$\ket{0}$.
For preconditioning, we described in the main text that we only need to implement
\begin{equation}
\label{eq:U1D}
    U^\pm \ket{j}=\text{e}^{\pm\text{i}\theta_j}\ket{j}
\end{equation}
with $\cos\theta_j=2^{-\floor{\log_2j}}$.
Notice that here we only need to apply $n-1$ rotations for the $n$-qubit state $j$.
For $d>1$, the action of $d$D preconditioner $P_{d\text{D}}$ on the state $\ket{j_1}\ket{j_2}\cdots\ket{j_d}$, the tensor product of $d$ $n$-qubit basis states, is
\begin{equation}
\label{eq:dDprecond}
    P_{d\text{D}} \ket{j_1}\ket{j_2}\cdots\ket{j_d}
    = 2^{-\floor{\log_2 j_{\max}}} \ket{j_1}\ket{j_2}\cdots\ket{j_d} \quad \forall j_{\max}\neq 0,
\end{equation}
where $j_{\max} = \max(j_1,\ldots,j_d)$.
The action is trivial for $j_{\max}=0$;
note that $j_{\max}\in\{0,\ldots,2^n-1\}$.
Similar to the 1D case, the $d$D preconditioner can be decomposed as a linear combination of two unitary operations.
Specifically, $P_{d\text{D}}$ can be decomposed~as
\begin{equation}
    P_{d\text{D}}=\frac{1}{2}\left(U^+_{d\text{D}}+U^{-}_{d\text{D}}\right),
\end{equation}
where 
\begin{equation}
    U^\pm_{d\text{D}}:=
    P_{d\text{D}}\pm\text{i}\sqrt{\mathbbm{1}-P^2_{d\text{D}}}
    =\text{e}^{\pm\text{i}\arccos{P_{d\text{D}}}}
\end{equation}
are unitary operations.
Analogous to the 1D case, the action of $U^\pm_{d\text{D}}$ on basis states is obtained from the preconditioner's action on basis states.
By Eq.~\eqref{eq:dDprecond}, the action of $U^\pm_{d\text{D}}$ on basis states is
\begin{equation}
    U^\pm_{d\text{D}} \ket{j_1}\ket{j_2}\cdots\ket{j_d}
    = \text{e}^{\pm\text{i}\theta_{\max}} \ket{j_1}\ket{j_2}\cdots\ket{j_d},
\end{equation}
where $\cos{\theta_{\max}} = 2^{-\floor{\log_2 j_{\max}}}$.
A simple approach to implement $U^\pm_{d\text{D}}$ is using the similarity between the action of $U^\pm_{d\text{D}}$ and $U^\pm$ on basis states.
Note that if we compute $j_{\max}$ into an $n$-qubit ancilla register, then the transformation in Eq.~\eqref{eq:UdD} would be similar to the transformation in Eq.~\eqref{eq:U1D}.
The quantum circuit in Fig.~\ref{fig:qCircs}(c) shows an implementation of~$U^\pm_{d\text{D}}$ using this simple approach.

\section{Detailed description and complexity analysis}
\label{apx:detailedComplexity}

\subsection{Block encoding for the preconditioned matrix}
\label{apx:BEncode_Ap}

Here we construct a zero-error block encoding for the preconditioned matrix $A_P$ using a single call to the zero-error block encoding of $A$ and a number of gates that scales as $\mathcal{O}(n^2)$.
Our construction follows from the fact that $A_P$ is obtained from~$A$ by a unitary transformation~$W$ and a diagonal scaling~$P$, i.e., the relation $A_P=PWAW^\dagger P$ between $A_P$ and $A$.
First we construct a zero-error block-encoding for the preconditioner $P$ and then use it to construct a block-encoding for~$A_P$.

We use the Linear Combination of Unitaries~(LCU) method~\cite{CW12} to construct a zero-error block-encoding for~$P$.
Specifically, by the decomposition $P=\frac{1}{2}(U^++ U^-)$ with $U^\pm$ defined in Eq.~\eqref{eq:Upm}, we obtain the $(1,1,0)$-block-encoding
\begin{equation}
\label{eq:U_P}
    U_P: = (H\otimes\mathbbm{1}_n)\Lambda_0(U^+)\Lambda_1(U^-)(H\otimes\mathbbm{1}_n)
\end{equation}
for $P$, where $H$ is the Hadamard gate and $\Lambda_b(U)$ is the $\ket{b}$-controlled-$U$ gate with $b\in\{0,1\}$.
The gate cost of implementing $U_P$ is $\mathcal{O}(n)$, which is the gate cost of implementing controlled-$U^\pm$ given in Lemma~\ref{lemma:cUpm}.
Now given a $(1,a,0)$-block-encoding $U_A$ for~$A$, a $(1,a,0)$-block-encoding $U_{A_P}$ for $A_P$ is constructed as
\begin{equation}
    U_{A_P} :=
    (\mathbbm{1}_{a-1} \otimes U_P)
    (\mathbbm{1}_a\otimes W)
    U_A
    (\mathbbm{1}_a\otimes W^\dagger)
    (\mathbbm{1}_{a-1} \otimes U_P),
\end{equation}
where we used the relation $A_P=PWAW^\dagger P$ and the fact that $W$ is unitary.
This block-encoding makes a single call to $U_A$ and its gate cost is dominated by the gate cost of $W$, which is $\mathcal{O}(n^2)$.

\subsection{Block encoding for the inverse of the preconditioned matrix}
\label{apx:qmi}
We now give a simple procedure for implementing the block-encoding operation \textsc{qmi} in Fig.~\ref{fig:qCircs}(a) and analyze its computational cost.
The approach we describe here is only to elucidate the action of \textsc{qmi} by a simple procedure and not to focus on its efficiency.
Finally, we state the cost of advanced methods for executing \textsc{qmi} which provide exponentially better scaling with respect to~$\varepsilon$.

For convenience, we state the action of \textsc{qmi} in Eq.~\eqref{eq:qmi} as follows.
For any $n$-qubit input state~$\ket{\psi_\text{in}}$, this operation performs the transformation
\begin{equation}
\label{eq:apx_qmi}
    \textsc{qmi}:
    \ket{0}_\texttt{flag}\ket{\psi_\text{in}}
    \mapsto
    \ket{0}_\texttt{flag} \frac{1}{\alpha}
    \widetilde{A}^{-1}_P\ket{\psi_\text{in}}
    + \ket{\phi},
\end{equation}
where $A_P$ is the preconditioned matrix, $\widetilde{A}^{-1}_P$ is an $\varepsilon$-approximation of $A^{-1}_P$ in the operator norm
and $\ket{\phi}$ is an unnormalized $(n+1)$-qubit state such that $({}_\texttt{flag}\!\bra{0}\otimes\mathbbm{1}_n)\ket{\phi}=0$.
For simplicity of discussion, here we take $\alpha=\norm{A_P^{-1}}$ but it could have a value greater than~$\norm{A_P^{-1}}$.
The flag qubit in Eq.~\eqref{eq:apx_qmi} is one of the $a$ ancilla qubits
in Eq.~\eqref{eq:qmi}, which we later use to mark success.
In general, $a\geq 1$ because $A_P$ is a nonunitary matrix.
The rest of the ancilla qubits are not displayed here for simplicity.

First we note that the eigenvalues of the preconditioned matrix $A_P$ are in the range $[1/\kappa_p,1]$, where $\kappa_p$ is the condition number of~$A_P$.
This is because (i)~the largest eigenvalue of $A$ is at most one by the assumption that $\norm{A}\leq 1$;
(ii)~the largest eigenvalue of preconditioner $P$ is one (see Fig.~\ref{fig:precond}(a));
and (iii)~$A$ is a positive-definite matrix by condition~(\ref{cond3}).
Therefore, the largest~eigenvalue of $A_P=PA_WP$ is less than or equal to one, so its eigenvalues are in $[1/\kappa_p,1]$.
Consequently, eigenvalues of $A^{-1}_P$ are in $[1,\kappa_p]$.

Let $\lambda_\ell \in [1/\kappa_p,1]$ and $\ket{v_\ell}$ be the eigenvalues and eigenvectors of $A_P$, respectively, and let
$\ket{\psi_\text{in}}=\sum_\ell \beta_\ell \ket{v_\ell}$ be the decomposition of the input state in eigenbasis of $A_P$.
Then applying the quantum phase-estimation algorithm, denoted by \textsc{qpe}, with unitary $U:=\exp{(\text{i}2\pi A_P)}$ on the input state yields
\begin{equation}
    \textsc{qpe} \ket{0^t}\ket{\psi_\text{in}}
    = \sum_\ell \beta_\ell \ket{\widetilde{\lambda}_\ell}\ket{v_\ell},
\end{equation}
where $\widetilde{\lambda}_\ell$ is an approximation of the eigenvalue such that $|\lambda_\ell-\tilde{\lambda}_\ell|\leq \varepsilon_\textsc{qpe}$
with an error $\varepsilon_\textsc{qpe}> 0$, and
\begin{equation}
\label{eq:t}
    t\in \mathcal{O}(\log(1/\varepsilon_\textsc{qpe}))
\end{equation}
is the number of bits in binary representation of eigenvalues.
To produce an approximation of the state $A^{-1}_P\ket{\psi_\text{in}}=\sum_\ell \beta_\ell/\lambda_\ell\ket{v_\ell}$ on the right-hand side, we use the controlled-rotation operation
\begin{equation}
    \textsc{crot}
    \ket{0}_\texttt{flag}
    \ket{\tilde{\lambda}_\ell}
    = 
    \left(
    \frac{1}{\alpha\widetilde{\lambda}_\ell}\ket{0}_\texttt{flag}
    +\sqrt{1-\frac{1}{(\alpha\widetilde{\lambda}_\ell)^2}}\ket{1}_\texttt{flag}
    \right)
    \ket{\tilde{\lambda}_\ell}
\end{equation}
that rotates the single qubit controlled on the value encoded in the second register.
By \textsc{qpe} and \textsc{crot}, the mapping \textsc{qmi} in Eq.~\eqref{eq:apx_qmi} can be implemented as
\begin{align}
    \ket{0}_\texttt{flag}
    \ket{0^t}
    \ket{\psi_\text{in}}
    &\xrightarrow{\mathbbm{1}_1\otimes\,\textsc{qpe}}
    \sum_\ell \beta_\ell
    \ket{0}_\texttt{flag}
    \ket{\widetilde{\lambda}_\ell}
    \ket{v_\ell}\\
    &\xrightarrow{\textsc{crot}\,\otimes\mathbbm{1}_n}
    \sum_\ell \beta_\ell
    \left(\frac{1}{\alpha\widetilde{\lambda}_\ell}
    \ket{0}_\texttt{flag}+\sqrt{1-\frac{1}{(\alpha\widetilde{\lambda}_\ell)^2}}
    \ket{1}_\texttt{flag}\right)
    \ket{\widetilde{\lambda}_\ell}
    \ket{v_\ell}\\
    &\xrightarrow{\mathbbm{1}_1\otimes\,\textsc{qpe}^\dagger}
    \sum_\ell \beta_\ell 
    \left(\frac{1}{\alpha\widetilde{\lambda}_\ell}\ket{0}_\texttt{flag}
    +\sqrt{1-\frac{1}{(\alpha\widetilde{\lambda}_\ell)^2}}
    \ket{1}_\texttt{flag}\right)
    \ket{0^t}
    \ket{v_\ell}
    \approx \frac{1}{\alpha}
    \ket{0}_\texttt{flag}
    \ket{0^t}
    A_P^{-1}\ket{\psi_\text{in}}
    + \ket{\phi'},
\end{align}
where $\ket{\phi'}$ is a state similar to $\ket{\phi}$ in Eq.~\eqref{eq:apx_qmi} but with extra $t$ qubits in the all-zero state $\ket{0^t}$.
The described implementation can be summarized as 
$\textsc{hhl} := (\mathbbm{1}_1\otimes\textsc{qpe}^\dagger) (\textsc{crot}\otimes\mathbbm{1}_n) (\mathbbm{1}_1\otimes\textsc{qpe})$
which is a high-level description of the HHL algorithm~\cite{HHL09} without amplitude amplification.

The computational cost of the described approach for implementing \textsc{qmi}
is determined by the cost of \textsc{crot} and \textsc{qpe}. 
The \textsc{crot} operation can be implemented using $t$ single-qubit controlled rotations~\cite[Fig.~13]{BSB+22}, so its gate cost is $\mathcal{O}(\log(1/\varepsilon_\textsc{qpe}))$ using Eq.~\eqref{eq:t}.
The query cost of \textsc{qpe} is $\mathcal{O}(2^t)$, where each query is to perform the controlled-$U$ operation with $U:=\exp{(\text{i}2\pi A_P)}$.
Therefore, by Eq.~\eqref{eq:t}, the query cost for implementing \textsc{qpe} is $\mathcal{O}(1/\varepsilon_\textsc{qpe})$.
This is indeed the number of calls to the block-encoding of $\exp{(\text{i}2\pi A_P)}$, which is a constant-time Hamiltonian simulation.
The block-encoding of $\exp{(\text{i}2\pi A_P)}$ with error $\varepsilon'$ can be constructed using $\mathcal{O}(\log(1/\varepsilon'))$ calls to the block-encoding of~$A_P$~\cite[Algorithm~4]{MRT+21}.
To ensure that the error in implementing \textsc{qpe} is upper bounded by $\varepsilon_\textsc{qpe}$, 
we need $2^t \varepsilon'\leq \varepsilon_\textsc{qpe}$, so $\varepsilon' \in \mathcal{O} (\varepsilon^2_\textsc{qpe})$ by Eq.~\eqref{eq:t}.
Therefore, the query cost for implementing \textsc{qpe} in terms of calls to the block-encoding of $A_P$ is $\mathcal{O}((1/\varepsilon_\textsc{qpe})\log(1/\varepsilon_\textsc{qpe}))$.
This is also the query cost in terms of calls to the block-encoding $U_A$ of $A$ because the block-encoding of $A_P$ can be constructed by one call to $U_A$ as shown in Appendix~\ref{apx:BEncode_Ap}.

The error for phase estimation $\varepsilon_\textsc{qpe}$ is determined from the smallest eigenvalue $\lambda_{\min}$ of $A_P$ and the upper bound $\varepsilon$ for the error in generating the desired state.
Specifically, in order to generate an approximation of
the desired state with an error bounded by~$\varepsilon$, the error $\varepsilon_\textsc{qpe}$ needs to be bounded as
$\varepsilon_\textsc{qpe} \leq  \lambda_{\min}\varepsilon/2 =  \varepsilon/(2\kappa_p)$.
By this bound, the second inequality in 
\begin{equation}
    \frac{1}{\alpha}
    \norm{
    \sum_\ell \beta_\ell \left(\frac{1}{\lambda_\ell} - \frac{1}{\widetilde{\lambda}_\ell}\right)
    \ket{v_\ell}}
    \leq 
    \frac{1}{\alpha}
    \norm{
    \sum_\ell \frac{\beta_\ell}{\lambda_\ell} 
    \frac{|\lambda_\ell-\widetilde{\lambda}_\ell|}{\widetilde{\lambda}_\ell}
    \ket{v_\ell}}
    \leq 
    \frac{\varepsilon}{\alpha}
    \norm{\sum_\ell\frac{\beta_\ell}{\lambda_\ell} \ket{v_\ell}}
    =\frac{\varepsilon}{\alpha}
    \norm{A^{-1} \ket{\psi_\text{in}}}
    \leq \varepsilon
\end{equation}
holds and the overall error between the true and approximate states is bounded by~$\varepsilon$.
Therefore, by $\varepsilon_\textsc{qpe}\in\mathcal{O}(\varepsilon/2\kappa_p)$ and the query and gate costs of \textsc{crot} and \textsc{qpe}, the query and gate cost for implementing \textsc{qmi} are
$\mathcal{O}((\kappa_p/\varepsilon)\log(\kappa_p/\varepsilon))
= \mathcal{O}((1/\varepsilon)\log(1/\varepsilon))$
and $\mathcal{O}(\log(\kappa_p/\varepsilon))=\mathcal{O}(\log(1/\varepsilon))$, respectively,
where the identity follows from $k_p \in \mathcal{O}(1)$.

The described approach provides a simple procedure for implementing the block-encoding \textsc{qmi} of $A^{-1}_P$ in Fig.~\ref{fig:qCircs}(a).
Using advanced methods based on the linear combination of unitaries~\cite{CKS17}, quantum singular-value transformation~\cite{GSL+19,MRT+21}, quantum eigenstate filtering~\cite{LT20} and discrete adiabatic theorem~\cite{CAS+22}, the query cost for constructing the block-encoding \textsc{qmi} is exponentially improved with respect to $\varepsilon$.
Specifically, the query cost of these methods is $\mathcal{O}(k_p\log(k_p/\varepsilon))$ in terms of calls to the block-encoding of $A_P$;
the query cost in Ref.~\cite{CAS+22} is indeed $\mathcal{O}(k_p\log(1/\varepsilon))$.
In our application $k_p \in \mathcal{O}(1)$, so we have 
\begin{equation}
\label{eq:Gqmi}
    \mathcal{Q}_\textsc{qmi}
    \in \mathcal{O}(\log(1/\varepsilon))
\end{equation}
for the query cost of constructing the block-encoding \textsc{qmi}.
This is also the query cost in terms of calls to the block-encoding $U_A$ of $A$ because the block-encoding of $A_P$ can be constructed by one call to $U_A$; see Appendix~\ref{apx:BEncode_Ap}. 

The query complexity of the advanced methods in Refs.~\cite{CKS17,GSL+19,MRT+21} is indeed determined from the degree of an odd polynomial approximating the rescaled inverse function $1/(\kappa_px)$ on the domain $I:=[-1,-k_p]\cup [k_p,1]$.
Let $\delta:=1/k_p$, then the degree of the odd polynomial $p(x)$ that is $\varepsilon'$-close to $\delta/x$ on the domain $I$ can be chosen to be
$\mathcal{O}(\frac{1}{\delta}\log\frac{1}{\varepsilon'})$
by~\cite[Corollary 69]{GSL+19};
an explicit construction for~$p(x)$ is given in Ref.~\cite{CKS17}.
To guarantee that $ \tilde{A}^{-1}_P= p(A_P^{-1})/\delta$ is $\varepsilon$-close to $A^{-1}_P$ in the operator norm, we need $\varepsilon'\in \mathcal{O}(\varepsilon/\delta)$.
Therefore, the query cost is
$\mathcal{O}(\frac{1}{\delta}\log\frac{1}{\varepsilon'})
=
\mathcal{O}(\kappa_p\log(\kappa_p/\varepsilon)).
$

\subsection{Amplitude amplification in our application}
\label{apx:qaa}

Here we show that only $\mathcal{O}(1)$ rounds of amplitude amplifications are needed on average to have a high success probability for the solution state in our application.
We provide a detailed procedure for each round of amplitude amplification and prove Lemma~\ref{lemma:AA} by analyzing the overall computational cost of amplitude amplification.

We begin by showing the expected number of amplitude amplifications in our application is $\mathcal{O}(1)$.
Let
\begin{equation}
\label{eq:psiout}
\ket{\psi_\text{out}}:=
    \frac{1} {\sqrt{2}}
    \sum_{a\in\{0,1\}}\ket{a}\ket{\psi_a},
    \quad  \ket{\psi_a}:= U^a W\ket{\bm{b}},
\end{equation}
be the state of the last two registers in Fig.~\ref{fig:qCircs}(a) after applying the controlled-$U^\pm$ operations and before \textsc{qmi}.
Then by the action of \textsc{qmi} in Eq.~\eqref{eq:apx_qmi} with $\ket{\psi_\text{in}}=\ket{\psi_a}$,
the input state to $\mathcal{A}$ in Fig.~\ref{fig:qCircs}(a) can be written as
\begin{equation}
\label{eq:Psi}
    \ket{\Psi}:=
    \frac{1}{\alpha\sqrt{2}}
    \sum_{a\in\{0,1\}}
    \ket{0}_\texttt{flag}
    \ket{a} A^{-1}_P\ket{\psi_a}
    + \ket{\Phi},
\end{equation}
where $\ket{\Phi}$ is an $(n+2)$-qubit unnormalized state with $({}_\texttt{flag}\!\bra{0}\otimes\mathbbm{1}_{n+1})\ket{\Phi}=0$.
Specifically, the input state $\ket{\Psi}$ can be generated using \textsc{qmi} as
\begin{equation}
\label{eq:genPsi}
    \ket{\Psi}=(\textsc{swap}\otimes\mathbbm{1}_n)(\mathbbm{1}_1\otimes\textsc{qmi})(\textsc{swap}\otimes\mathbbm{1}_n)\ket{0}_\texttt{flag}\ket{\psi_\text{out}},
\end{equation}
where \textsc{swap} is the two-qubit swap gate and $\ket{\psi_\text{out}}$ is given in Eq.~\eqref{eq:psiout}.
Let us now define the normalized state
\begin{equation}
    \ket{\text{G}}:=\frac{1}{\mathcal{N}}
    (\mathbbm{1}_1\otimes A^{-1}_P)\ket{\psi_\text{out}}
    =\frac{1}{\mathcal{N}\sqrt{2}}
    \sum_a \ket{a}A^{-1}_P\ket{\psi_a}
\end{equation}
with the normalization factor $\mathcal{N}:=\norm{(\mathbbm{1}_1\otimes A^{-1}_P)\ket{\psi_\text{out}}}$.
Then the state $\ket{\Psi}$ in Eq.~\eqref{eq:Psi} can be decomposed as
\begin{equation}
    \ket{\Psi}=
    \sqrt{p}\ket{0}_\texttt{flag}\ket{\text{G}}
    + \sqrt{1-p}\ket{\text{B}},
\end{equation}
where $\ket{0}_\texttt{flag}\!\ket{\text{G}}$ is the `good' part of $\ket{\Psi}$ with the success probability
$p:=(\mathcal{N}/\alpha)^2$ and the normalized state $\ket{\text{B}}=\ket{\Phi}/\sqrt{1-p}$ is the `bad' part of $\ket{\Psi}$.
We now establish a lower bound for $p$.
Note that $\mathcal{N}^2\geq 1$ because
\begin{equation}
    \mathcal{N}^2 = \frac{1}{2}
    \norm{\sum_{a\in\{0,1\}}\ket{a}A^{-1}_p\ket{\psi_a}}^2
    = \frac{1}{2}\sum_a\norm{A^{-1}_P\ket{\psi_a}}^2
    \geq 
    \frac{1}{2}
    \sum_a \min_{\ket{\psi_a}}
    \norm{A^{-1}_P\ket{\psi_a}}^2
    = \frac{1}{2}
    \sum_a \left(\lambda_{\min} \left(A^{-1}_P\right)\right)^2
    \geq 1,
\end{equation}
where the last equality follows from the fact that eigenvalues of $A^{-1}_P$ lie in the range $[1,\kappa_p]$.
Therefore, a lower bound for $p$ is
\begin{equation}
    p=\frac{\mathcal{N}^2}{\alpha^2}
    \geq \frac{1}{\alpha^2} =
    \frac{1}{\kappa^2_p},
\end{equation}
so $p\in\mathcal{O}(1)$ because
$\kappa_p \in \mathcal{O}(1)$ in our application.
The value of $p$ is constant but unknown.
Nonetheless, by~\cite[Theorem 3]{BHM+02}, we can boost the success probability to say~$p=2/3$ using an expected rounds of amplitude amplifications that is in $\mathcal{O}(1)$.

We now describe a procedure for implementing each round of amplitude amplification.
Let $R_n:=2\ketbra{0^n}{0^n}-\mathbbm{1}_n$ be the $n$-qubit reflection operator with respect to the $n$-qubit zero state $\ket{0^n}$.
Each round of amplitude amplification is composed of two reflection operations:
one about the good state and the other about the initial state.
Specifically, let
\begin{align}
    \label{eq:RGood}
    R_\text{Good}&:=
    R_0 \otimes \mathbbm{1}_{n+1}
    = Z \otimes \mathbbm{1}_{n+1}, \\
    R_\text{Initial}&:= 2\ketbra{\Psi}{\Psi}-\mathbbm{1}_{n+2}
    = U_\Psi R_{n+2} U_\Psi^\dagger
\end{align}
be the reflections about the good and initial states, respectively.
Then the amplitude amplification is
\begin{equation}
\label{eq:AAprocedure}
    \mathcal{A}:=
    R_\text{Initial} R_\text{Good}
    = (U_\Psi R_{n+2} U_\Psi^\dagger)(Z\otimes\mathbbm{1}_{n+1}),
\end{equation}
where $U_{\Psi}$ is the unitary that prepares the $(n+2)$-qubit initial state $\ket{\Psi}$ from the all-zero state, i.e., $U_{\Psi}\ket{0^{n+2}}=\ket{\Psi}$.
Notice here that the good state $\ket{0}_\texttt{flag}\!\ket{\text{G}}$ is marked by the flag qubit, so the reflection about the good state is constructed by the reflection about $\ket{0}$ of the flag qubit as in Eq.~\eqref{eq:RGood}.

By Eq.~\eqref{eq:AAprocedure}, we need an implementation for the $(n+2)$-qubit reflection $R_{n+2}$ and an implementation for $U_{\Psi}$ to construct a procedure for implementing each round of amplitude amplification $\mathcal{A}$.
Using Eq.~\eqref{eq:genPsi}, the unitary $U_{\Psi}$ is composed of \textsc{qmi}, \textsc{swap} and the operations that prepare the state $\ket{\psi_\text{out}}$ in Eq.~\eqref{eq:psiout}.
By the circuit in Fig.~\ref{fig:qCircs}(a), we have
\begin{equation}
    \ket{\psi_\text{out}}=\frac{1} {\sqrt{2}}
    \sum_{a\in\{0,1\}}\ket{a}\ket{\psi_a}
    = \Lambda_1(U^-) \Lambda_0(U^+)
    (\mathbbm{1}_1\otimes W)
    (H\otimes\mathcal{P}_{\bm{b}})\ket{0}\ket{0^n},
\end{equation}
where $\Lambda_b(U)$ with $b\in\{0,1\}$ is the $\ket{b}$-controlled-$U$ operation.
Therefore,
\begin{equation}
\label{eq:Upsi}
    U_{\Psi} =
    (\textsc{swap}\otimes\mathbbm{1}_n)(\mathbbm{1}_1\otimes\textsc{qmi})(\textsc{swap}\otimes\mathbbm{1}_n)
    \Lambda_1(U^-) \Lambda_0(U^+)
    (\mathbbm{1}_1\otimes W)
    (H\otimes\mathcal{P}_{\bm{b}}),
\end{equation}
which yields an implementation for $U_{\Psi}$.
The reflection operator $R_{n+2}$ can be implemented using phase kickback and one ancilla qubit.
Specifically, the identity
\begin{equation}
\label{eq:Rn}
    (R_n\otimes \mathbbm{1}_1)\ket{\psi}\ket{01}
    = -X^{n+1}(\mathbbm{1}_n\otimes H)\Lambda_1^n(X)
    (\mathbbm{1}_n\otimes H)X^{n+1}\ket{\psi}\ket{01},
\end{equation}
yields an implementation for $R_n$ up to an irrelevant global $-1$ phase factor, 
where $\ket{\psi}$ is any $n$-qubit state and
\begin{equation}
   \Lambda_1^n(X):=\ketbra{1^n}{1^n}\otimes X + (\mathbbm{1}_n-\ketbra{1^n}{1^n})\otimes\mathbbm{1}_1  
\end{equation}
is the $(n+1)$-bit Toffoli gate.
Having described a procedure for implementing $\mathcal{A}$, we now prove Lemma~\ref{lemma:AA} by analyzing the overall computational cost of amplitude amplification.

\begin{proof}[Proof of Lemma~\ref{lemma:AA}]
By Eqs.~\eqref{eq:AAprocedure} and~\eqref{eq:Upsi}, the gate cost for each round of amplitude amplification is determined by the cost of executing the $(n+2)$-qubit reflection $R_{n+2}$, \textsc{qmi}, controlled-$U^\pm$ and $W$.
Two uses of the procedure $\mathcal{P}_{\bm{b}}$ is also needed.
By Eq.~\eqref{eq:Rn}, the reflection $R_{n+2}$ can be performed using one ancilla qubit and one $(n+3)$-bit Toffoli gate, which can be implemented using $\mathcal{O}(n)$ elementary one- and two-qubit gates~\cite[Corollary~7.4]{BBC+95}.
The gate cost for executing the quantum wavelet transform $W$ is $\mathcal{O}(n^2)$~\cite{BA24,FW99}.

By the query cost given in Eq.~\eqref{eq:Gqmi} for executing \textsc{qmi} and the gate cost given in Lemma~\ref{lemma:cUpm} for executing controlled-$U^\pm$, and because only $\mathcal{O}(1)$ rounds of amplitude amplification on average is needed in our application, executing amplitude amplification needs
$\mathcal{O}(1)$ uses of $\mathcal{P}_{\bm{b}}$, $\mathcal{O}(\log(1/\varepsilon))$ uses of the block-encoding $U_A$ of~$A$ and $\mathcal{O}(n^2)$ gates, all on average.
\end{proof}

\subsection{Recovering the norm of the solution state by repetition}
\label{apx:norm}

The solution state that our algorithm generates is a normalized state stored on a quantum register.
However, the normalization factor is needed for computing the expectation value of a given observable; see Eq.~\eqref{eq:expMprime}.
The normalization factor can be obtained from the probability of success state, i.e., the state $\ket{0}$, for the flag qubit by repeating the algorithm without amplitude amplification.
The success probability is
\begin{equation}
    p_\text{succ}
    =\frac{1}{4\alpha^2}
    \norm{\sum_{ab}\ket{ab}\ket{\psi_{ab}}}^2
    =\frac{1}{4\alpha^2}
    \sum_{ab} \norm{\ket{\psi_{ab}}}^2
    =\frac{\xi^2}{\kappa^2_p},
\end{equation}
where the last identity follows from Eq.~\eqref{eq:psi}.
Hence, sufficiently repeating the algorithm and estimating $p_\text{succ}$ yields the normalization factor $\xi = \kappa_p \sqrt{p_\text{succ}}$ for the solution state $\ket{\psi}$ in Eq.~\eqref{eq:psi}, which is needed for computing the expectation value in~Eq.~\eqref{eq:expMprime}.

\subsection{The controlled unitaries}
\label{apx:cU}

In this appendix, we give a procedure to implement the controlled-$U^\pm$ with the $n$-qubit unitary $U^\pm$ given in Eq.~\eqref{eq:Udecomp}.
We show that controlled-$U^\pm$ can be executed using $n$ ancilla qubits, $\mathcal{O}(n)$ Toffoli gates, $\mathcal{O}(n)$ Pauli-$X$ gates, and $\mathcal{O}(n)$ controlled-rotation gates, thereby proving Lemma~\ref{lemma:cUpm}.

By Fig.~\ref{fig:qCircs}(b), $U^\pm$ is a product of multi-controlled rotations, where all controls are on the state $\ket{0}$ of control qubits.
Therefore, controlled-$U^\pm$ is also a product of multi-controlled rotations but with one extra control qubit.
The extra control for $U^+$ is on the state $\ket{0}$ and for $U^-$ is on the state $\ket{1}$; see Fig.~\ref{fig:qCircs}(a).
Each $\ket{0}$-control can be made $\ket{1}$-control by applying a Pauli-$X$ gate before and after the control, so only $\mathcal{O}(n)$ Pauli-$X$ gates are needed to make all controls to be on the $\ket{1}$ of qubits.

Therefore, we need to implement a product of multi-controlled rotations of the form
$\Lambda_1^1(R_z(\theta_0))
\Lambda_1^2(R_z(\theta_1))
\cdots
\Lambda_1^n(R_z(\theta_{n-1}))$
where $\Lambda_1^r(R_z(\theta))$,
for some angle $\theta$, is the
$\ket{1^r}$-controlled-$R_z(\theta)$ operation.
A simple approach to implement such a product using $n-1$ ancilla qubits is as follows.
Let $\ket{c_1,c_2,\ldots,c_n}$ be the state of the control qubits in the computational basis.
First compute $c_1c_2$ into the first ancilla qubit using a Toffoli gate; the state of the first ancilla is transformed as $\ket{0}\mapsto\ket{c_1c_2}$.
Then compute $c_1c_2c_3$ into the second ancilla qubit using a Toffoli gate, followed by computing $c_1c_2c_3c_4$ into the third ancilla qubit using a Toffoli gate.
By continuing this process, the state of $n-1$ ancilla qubits transforms as $\ket{0,0,\ldots,0}\mapsto\ket{c_1c_2, c_1c_2c_3,c_1c_2c_3c_4,\ldots,c_1c_2c_3\cdots c_n}$.
Now to implement $\Lambda_1^r(R_z(\theta))$ with $r\geq 2$, apply $R_z(\theta)$ on the target qubit controlled on the state $\ket{1}$ of the $(r-1)$th ancilla~qubit.
Finally, uncompute the ancilla qubits by applying Toffoli gates.
Evidently, the described procedure needs $\mathcal{O}(n)$ Toffoli gates and $\mathcal{O}(n)$ single-controlled rotations. 

\subsection{Computing the minimum or maximum of values encoded in quantum registers}
\label{apx:minmax}

In this appendix, we describe a procedure for implementing the \textsc{max} operation defined
as
\begin{equation}
\label{eq:appMax}
    \textsc{max} \ket{j_1}_\texttt{reg$1$}
    \ket{j_2}_\texttt{reg$2$}
    \cdots\ket{j_d}_\texttt{reg$d$}
    \ket{0^n}_\texttt{out}
    := \ket{j_1}_\texttt{reg$1$}
    \ket{j_2}_\texttt{reg$2$}
    \cdots\ket{j_d}_\texttt{reg$d$}
    \ket{j_{\max}}_\texttt{out},
\end{equation}
where $j_{\max}:=\max(j_1,\cdots,j_d)$ and each register comprises $n$ qubits encoding an $n$-bit number.
We also describe how the procedure for \textsc{max} can be modified to implement \textsc{min} operation where $j_{\max}$ is replaced with $j_{\min}:=\min(j_1,\cdots,j_d)$.
Finally, we discuss the computational cost of our implementations for \textsc{max} and \textsc{min}, and prove Lemma~\ref{lemma:max}.

In our implementations, we use a quantum comparator operation \textsc{comp} defined as
\begin{equation}
    \textsc{comp}\ket{x}_\texttt{reg$i$}
    \ket{y}_\texttt{reg$j$}
    \ket{0}_\texttt{flag}:=
    \begin{cases}
    \ket{x}_\texttt{reg$i$}
    \ket{y}_\texttt{reg$j$}
    \ket{1}_\texttt{flag} \;\;\text{if}\, x\geq y, \\
    \ket{x}_\texttt{reg$i$}
    \ket{y}_\texttt{reg$j$}
    \ket{0}_\texttt{flag} \;\;\text{if}\, x<y,
    \end{cases}
\end{equation}
where the flag qubit is flipped if the value encoded in the first register is greater than or equal to the value encoded in the second register;
note that the flag qubit here is not the same as the flag qubit used in the main text.
Let us define \textsc{cadd} operation as
\begin{equation}
\label{eq:cadd}
    \textsc{cadd} \ket{x}_{\texttt{reg}i}
    \ket{y}_{\texttt{reg}j}
    \ket{f}_\texttt{flag}
    \ket{0^n}_\texttt{out}
    := \begin{cases}
    \ket{x}_{\texttt{reg}i}
    \ket{y}_{\texttt{reg}j}
    \ket{f}_\texttt{flag}
    \ket{x}_\texttt{out}
    \;\;\text{if}\, f=1, \\
    \ket{x}_{\texttt{reg}i}
    \ket{y}_{\texttt{reg}j}
    \ket{f}_\texttt{flag}
    \ket{y}_\texttt{out} 
    \;\,\,\text{if}\, f=0,
    \end{cases}
\end{equation}
which adds the value encoded in the first or second register to the last register controlled by the value of the flag register: $x$ is added if \texttt{flag} is $\ket{1}$ and $y$ is added if \texttt{flag} is $\ket{0}$.
The value encoded in \texttt{out} register is $\max(x,y)$.
To compute $\min(x,y)$ into \texttt{out}, we only need to modify \textsc{cadd} so that it adds $y$ to \texttt{out} if \texttt{flag} is $\ket{1}$ and adds $x$ if \texttt{flag} is $\ket{0}$.

Using \textsc{comp} and \textsc{cadd}, \textsc{max} can be performed recursively in $\mathcal{O}(\log_2 d)$ steps as follows.
In the first step, compute the maximum of values encoded in each consecutive pair \texttt{reg2$j$} and \texttt{reg2$j$+1} and write the result into an $n$-qubit temporary register \texttt{tmp$j$}.
For each pair, \textsc{comp} and \textsc{cadd} are performed once and one flag qubit is used.
In the second step, compute the maximum of the values encoded in each consecutive pair \texttt{tmp2$j$} and \texttt{tmp2$j$+1} and write the result into a new temporary register.
These operations are repeated in each step, but the number of pairs in each step is reduced by two.
In the last step, $j_{\max}$ is computed into \texttt{out} register, and all temporary and flag qubits are erased by appropriate uncomputations.

The number of temporary registers in the first step is $d/2$; the second step is $d/4$; the third step is $d/8$, and so forth.
Hence the total number of temporary registers needed in the described approach is $d(1/2+1/4+1/8+\cdots)\leq d$.
Similarly, the number of needed flag qubits is at most $d$.
As each temporary register comprises $n$ qubits, the total number of ancilla qubits needed to implement \textsc{max} in Eq.~\eqref{eq:appMax} is $\mathcal{O}(dn)$.
The gate cost is obtained similarly.
Specifically, the gate cost $\mathcal{G}_\textsc{max}$ to implement \textsc{max} is
\begin{equation}
\label{eq:Gmax}
    \mathcal{G}_\textsc{max} = d(\mathcal{G}_\textsc{comp}
    +\mathcal{G}_\textsc{cadd})(1/2+1/4+\cdots) \times 2 \leq
    2d(\mathcal{G}_\textsc{comp}
    +\mathcal{G}_\textsc{cadd}),
\end{equation}
where the extra factor of $2$ comes from the cost of uncomputation.
The \textsc{cadd} operation in Eq.~\eqref{eq:cadd} can be implemented using $2n$ Toffoli gates.
As shown in Ref.~\cite{Gid18}, the quantum comparator \textsc{comp} can be executed using $n$ Toffoli gates and $n$ ancilla qubits.
Therefore, $\mathcal{G}_\textsc{max}\in \mathcal{O}(dn)$ by Eq.~\eqref{eq:Gmax} yielding Lemma~\ref{lemma:max}.
We also have $\mathcal{G}_\textsc{min}\in \mathcal{O}(dn)$ by the above discussion.

\subsection{Block-encoding approach for linear systems with uniform condition number}
\label{apx:qlsa}

Having constructed a unitary block-encoding for the preconditioned matrix~$A_P$ in Appendix~\ref{apx:BEncode_Ap}, the QSVT approach~\cite{GSL+19} can be used to generate a quantum solution for the preconditioned linear system $A_P\ket{\bm{u}_P}=\ket{\bm{b}_P}$,
where $A_P$ has a bounded condition number as~$\kappa_p\leq c$ for~$c$ a constant number independent of the size of the matrix~$A_P$.
Here we show that the bounded condition number enables constructing a simpler polynomial approximation for the inverse function compared to the general case given in Lemmas~17--19 of Ref.~\cite{CKS17}.

To solve a linear system by QSVT, we need to find an odd polynomial that $\varepsilon$-approximates the inverse function~$\textsc{inv}(x):=1/x$ over the range of singular values~$\sigma_\ell$ of~$A_P$, which belong to the interval $I_c:=[1/c,1]$ by the discussion in Appendix~\ref{apx:qmi}.
As required by QSVT, the polynomial must be bounded in magnitude by~$1$.
So we seek a polynomial approximation to the rescaled inverse function~$(1/2c)1/x$ on~$I_c$.
Notice that this function has magnitude~$\leq 1/2$ because of the prefactor~$1/2$; this prefactor is only used for later simplification~\cite[p.~24]{MRT+21}.
The output of the QSVT is then an approximation of~$(1/2c)A^{-1}_P$.
Therefore, due to the multiplicative factor~$1/2c$, we seek an~$\varepsilon/2c$ approximation to the rescaled function, which yields an~$\varepsilon$ approximation to~$\textsc{inv}$.

We provide the appropriate polynomial in the following lemma.

\begin{lemma}
Let~$I^+_c:=[1/c,1]$ and~$I^-_c:=(-1,-1/c)$ for~$c\geq 1$.
Also, for $\varepsilon\in(0,1)$, let~$P_{\varepsilon,I^\pm_c}^\textsc{inv}(x)$ be a polynomial~that is $\varepsilon$-close to the inverse function $\textsc{inv}(x):=1/x$ on the interval~$I^\pm_c$ and let~$P_{\varepsilon/c,I^\pm_c}^\textsc{step}(x)$ be a polynomial that is $(\varepsilon/c)$-close to the unit step function~$\textsc{step}(x)$, which is~$1$ for $x>0$ and~$0$ otherwise. Then the odd polynomial
\begin{equation}
\label{eq:oddPoly}
    P_{\varepsilon/c,I_c}^\textsc{mi}(x) =
     (1/2c)P_{\varepsilon,I^+_c}^\textsc{inv}(x)
     P^\textsc{step}_{\varepsilon/c,I^+_c}(x)
    - (1/2c)P_{\varepsilon,I^-_c}^\textsc{inv}(-x)
    P^\textsc{step}_{\varepsilon/c,I^-_c}(-x)
\end{equation}
is $(\varepsilon/c)$-close to~$1/(2cx)$ on~$I_c:=I^+_c\cup I^-_c$ and its magnitude is bounded from above by~$1$.
\end{lemma}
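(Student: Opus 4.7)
The plan is to prove three things separately: that $P^\textsc{mi}_{\varepsilon/c,I_c}$ is odd, that it $(\varepsilon/c)$-approximates the rescaled inverse $1/(2cx)$ on $I_c$, and that its magnitude does not exceed $1$. First, I would verify oddness by direct substitution $x\to -x$ in Eq.~\eqref{eq:oddPoly}: the two summands swap with each other while picking up an overall minus sign, so $P^\textsc{mi}(-x)=-P^\textsc{mi}(x)$. Since $1/(2cx)$ is itself odd, oddness of $P^\textsc{mi}$ lets me restrict attention to $x\in I^+_c$ in the subsequent error analysis and recover the $I^-_c$ case by symmetry.

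For the approximation bound, I would fix $x\in I^+_c=[1/c,1]$ and write
\begin{equation}
P^\textsc{mi}(x)-\frac{1}{2cx}=\frac{1}{2c}\left[P^\textsc{inv}_{\varepsilon,I^+_c}(x)P^\textsc{step}_{\varepsilon/c,I^+_c}(x)-\frac{1}{x}\right]-\frac{1}{2c}P^\textsc{inv}_{\varepsilon,I^-_c}(-x)P^\textsc{step}_{\varepsilon/c,I^-_c}(-x).
\end{equation}
On $I^+_c$ the step polynomial is within $\varepsilon/c$ of $1$ and the inverse polynomial is within $\varepsilon$ of $1/x$, hence bounded in magnitude by $c+\varepsilon$. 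Splitting the first bracket as $P^\textsc{inv}[P^\textsc{step}-1]+[P^\textsc{inv}-1/x]$ and applying the triangle inequality yields a contribution of order $\varepsilon$. For the second term, $-x\in I^-_c$, so $P^\textsc{step}_{\varepsilon/c,I^-_c}(-x)$ lies within $\varepsilon/c$ of $\textsc{step}(-x)=0$ while $|P^\textsc{inv}_{\varepsilon,I^-_c}(-x)|\leq c+\varepsilon$, giving another contribution of order $\varepsilon$. Summing and dividing by $2c$ yields the desired $\mathcal{O}(\varepsilon/c)$ bound, with the numerical prefactor absorbed by a minor relabelling of $\varepsilon$.

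For the magnitude bound, I would invoke standard properties of the constituent polynomials used in QSVT-style constructions (see Refs.~\cite{GSL+19,CKS17}): one can choose $P^\textsc{inv}_{\varepsilon,I^\pm_c}$ so that $|P^\textsc{inv}(x)|\leq c$ on $[-1,1]$, and $P^\textsc{step}_{\varepsilon/c,I^\pm_c}$ so that $|P^\textsc{step}(x)|\leq 1$ on $[-1,1]$. Then the triangle inequality gives
\begin{equation}
|P^\textsc{mi}(x)|\leq \frac{1}{2c}\bigl(c\cdot 1+c\cdot 1\bigr)=1
\end{equation}
uniformly on $[-1,1]$, which in particular covers $I_c$.

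The main obstacle I expect is purely bookkeeping: matching the final tolerance to exactly $\varepsilon/c$ (rather than a multiple of it) requires careful balancing of the two constituent tolerances. The conceptually important point is understanding \emph{why} the step polynomial must be accurate to $\varepsilon/c$ rather than $\varepsilon$: the inverse polynomial is as large as $c$ in magnitude, so the step factor in each summand must be sharpened by a factor $c$ for the error in the product to remain at the target level. Once this scaling is in place, the rest of the argument is a straightforward application of the triangle inequality together with the uniform bounds on the constituent polynomials.
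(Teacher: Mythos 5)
Your proposal is correct and follows essentially the same route as the paper's proof: add and subtract a cross term, apply the triangle inequality, use that the step polynomial is $(\varepsilon/c)$-close to $1$ on $I^+_c$ (and to $0$ on $I^-_c$) and that the inverse polynomial is bounded in magnitude by roughly $c$, then divide by $2c$. If anything you are slightly more careful than the paper, which on $I^+_c$ silently drops the second summand of the defining expression rather than bounding it by $(c+\varepsilon)\cdot(\varepsilon/c)$ as you do.
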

\begin{proof}
Without loss of generality, we consider the positive subinterval $I^+_c$ where $x>0$. In this case, we have
\begin{align}
    \abs{P_{\varepsilon/c,I_c}^\textsc{mi}(x)-\frac{1}{2cx}}
    &= \frac{1}{2c}\abs{P_{\varepsilon,I^+_c}^\textsc{inv}(x)
     P^\textsc{step}_{\varepsilon/c,I^+_c}(x)-\frac{1}{x}}\\
    & \leq \frac{1}{2c}
    \left(
    \abs{P_{\varepsilon,I^+_c}^\textsc{inv}(x)-\frac{1}{x}}
    \abs{P^\textsc{step}_{\varepsilon/c,I^+_c}(x)}
    +\abs{\frac{1}{x}}\abs{P^\textsc{step}_{\varepsilon/c,I^+_c}(x)-1}
    \right)\\
    & \leq \frac{1}{2c}\left(\varepsilon\times 1 + c \times \frac{\varepsilon}{c}\right) = \frac{\varepsilon}{c},
\end{align}
where the first inequality is obtained by adding and subtracting $(1/x)P^\textsc{step}_{\varepsilon/c,I^+_c}(x)$ and using the triangle inequality.
We now bound the magnitude of $P_{\varepsilon/c,I_c}^\textsc{mi}(x)$ for $x>0$ as
\begin{equation}
    \abs{P_{\varepsilon,I_c}^\textsc{mi}(x)} = \frac{1}{2c} \abs{P_{\varepsilon,I^+_c}^\textsc{inv}(x)} \abs{ P^\textsc{step}_{\varepsilon/c,I^+_c}(x)} \leq \frac{1}{2c}(c+\varepsilon) \times 1 \leq 1,
\end{equation}
where we simply used the bound on the magnitudes of \textsc{inv} and \textsc{step} on $I^+_c$.
The same bounds hold for $I^-_c$.
\end{proof}

We now show that the bounded condition number enables constructing a simple polynomial approximation for \textsc{inv}.
To this end, let us consider the case $\kappa_p\leq c=32$, where the constant value~$32$ is chosen for illustration.
We want an $\varepsilon/2\kappa_p$ approximation to $(1/\kappa_p)(1/2x)$ in the interval~$[-1,1]\setminus [-1/c,1/c]$.
Without loss of generality, we consider the positive part of this interval.
Let us define
$z_0:=1+2/c$ and
$x':=2x-z_0$;
notice $z_0>1$ for any $c$ and $x'\in [-1,1]$.
First, we find a polynomial approximation for $1/2x=1/(x'+z_0)$.
Using Eqs.~(8)--(10) in Ref.~\cite{mat06}, for any $x\in[-1,1]$ and $z\notin [-1,1]$, we have
\begin{equation}
    \frac{1}{x+z} = \frac{1}{\sqrt{z^2-1}} + \frac{2}{\sqrt{z^2-1}}\sum_{\ell=1}^{\infty} \frac{(-)^\ell}{(z+\sqrt{z^2-1})^\ell} T_\ell(x),
\end{equation}
where~$T_\ell(x)$ are the Chebyshev polynomials of the first kind.
We use this equation to approximate~$1/(x'+z_0)$ by polynomials.
To this end, we note that $z_0+\sqrt{z_0^2-1}\geq \sqrt{2}$,
so we need to find a sufficient upper bound $\ell_{\max}$ for the sum over~$\ell$ that gives an $\varepsilon/2$ approximation to $1/2x$.
The sufficient $\ell_{\max}$ is in $\Omega(\log{(1/\varepsilon)})$, because we need to find an $\ell_{\max}$ such that
$(1/\sqrt{2})^{\ell_{\max}}\leq \varepsilon/2$,
which yields $\ell_{\max} \geq 2+ 2\log(1/\varepsilon)$.

Having a polynomial approximation for \textsc{inv}, we then use $P^\textsc{step}_{\varepsilon,I^\pm_\kappa}(\pm x)=1/2\pm(1/2)P^\textsc{sign}_{\varepsilon,I_\kappa}(x)$ and a polynomial approximation for \textsc{sign} as in Ref.~\cite{MRT+21} to obtain a polynomial approximation for \textsc{step}.
The polynomial approximations for \textsc{inv} and \textsc{step} yield the desired polynomial approximation for matrix inversion by Eq.~\eqref{eq:oddPoly}.
\end{document}